\newtheorem{definition}{Definition}
\newtheorem{theorem}[definition]{Theorem}
\def\squareforqed{$\square$}
\def\qed{\ifmmode\squareforqed\else{\unskip\nobreak\hfil
\penalty50\hskip1em\null\nobreak\hfil\squareforqed
\parfillskip=0pt\finalhyphendemerits=0\endgraf}\fi}
\def\endenv{\ifmmode\;\else{\unskip\nobreak\hfil
\penalty50\hskip1em\null\nobreak\hfil\;
\parfillskip=0pt\finalhyphendemerits=0\endgraf}\fi}
\newenvironment{proof}{\noindent \textbf{{Proof.~} }}{\qed}
\def\bpf{\begin{proof}}
\def\epf{\end{proof}}
\def\bea{\begin{eqnarray}}
\def\eea{\end{eqnarray}}
\def\beq{\begin{equation}}
\def\eeq{\end{equation}}
\def\bal{\begin{aligned}}
\def\eal{\end{aligned}}
\def\bma{\begin{bmatrix}}
\def\ema{\end{bmatrix}}
\def\min{\mathop{\rm min}}
\def\ra{\rightarrow}
\def\ox{\otimes}
\def\a{\alpha}
\def\d{\delta}
\def\m{\mu}
\def\n{\nu}
\def\p{\pi}
\def\r{\rho}
\def\s{\sigma}
\def\G{\Gamma}
\newcommand{\bra}[1]{\langle{#1}|}
\newcommand{\ket}[1]{|{#1}\rangle}
\newcommand{\proj}[1]{|{#1}\rangle \langle {#1}|}
\newcommand{\ketbra}[2]{|{#1}\rangle \! \langle{#2}|}
\newcommand{\nc}{\newcommand}
\nc{\bbA}{\mathbb{A}} \nc{\bbB}{\mathbb{B}} \nc{\bbC}{\mathbb{C}}
\nc{\bbD}{\mathbb{D}} \nc{\bbE}{\mathbb{E}} \nc{\bbF}{\mathbb{F}}
\nc{\bbG}{\mathbb{G}} \nc{\bbH}{\mathbb{H}} \nc{\bbI}{\mathbb{I}}
\nc{\bbJ}{\mathbb{J}} \nc{\bbK}{\mathbb{K}} \nc{\bbL}{\mathbb{L}}
\nc{\bbM}{\mathbb{M}} \nc{\bbN}{\mathbb{N}} \nc{\bbO}{\mathbb{O}}
\nc{\bbP}{\mathbb{P}} \nc{\bbQ}{\mathbb{Q}} \nc{\bbR}{\mathbb{R}}
\nc{\bbS}{\mathbb{S}} \nc{\bbT}{\mathbb{T}} \nc{\bbU}{\mathbb{U}}
\nc{\bbV}{\mathbb{V}} \nc{\bbW}{\mathbb{W}} \nc{\bbX}{\mathbb{X}}
\nc{\bbY}{\mathbb{Y}} \nc{\bbZ}{\mathbb{Z}}
\nc{\bA}{{\bf A}} \nc{\bB}{{\bf B}} \nc{\bC}{{\bf C}}
\nc{\bD}{{\bf D}} \nc{\bE}{{\bf E}} \nc{\bF}{{\bf F}}
\nc{\bG}{{\bf G}} \nc{\bH}{{\bf H}} \nc{\bI}{{\bf I}}
\nc{\bJ}{{\bf J}} \nc{\bK}{{\bf K}} \nc{\bL}{{\bf L}}
\nc{\bM}{{\bf M}} \nc{\bN}{{\bf N}} \nc{\bO}{{\bf O}}
\nc{\bP}{{\bf P}} \nc{\bQ}{{\bf Q}} \nc{\bR}{{\bf R}}
\nc{\bS}{{\bf S}} \nc{\bT}{{\bf T}} \nc{\bU}{{\bf U}}
\nc{\bV}{{\bf V}} \nc{\bW}{{\bf W}} \nc{\bX}{{\bf X}}
\nc{\bY}{{\bf Y}} \nc{\bZ}{{\bf Z}}
\nc{\bmA}{{\bm A}} \nc{\bmB}{{\bm B}} \nc{\bmC}{{\bm C}}
\nc{\bmD}{{\bm D}} \nc{\bmE}{{\bm E}} \nc{\bmF}{{\bm F}}
\nc{\bmG}{{\bm G}} \nc{\bmH}{{\bm H}} \nc{\bmI}{{\bm I}}
\nc{\bmJ}{{\bm J}} \nc{\bmK}{{\bm K}} \nc{\bmL}{{\bm L}}
\nc{\bmM}{{\bm M}} \nc{\bmN}{{\bm N}} \nc{\bmO}{{\bm O}}
\nc{\bmP}{{\bm P}} \nc{\bmQ}{{\bm Q}} \nc{\bmR}{{\bm R}}
\nc{\bmS}{{\bm S}} \nc{\bmT}{{\bm T}} \nc{\bmU}{{\bm U}}
\nc{\bmV}{{\bm V}} \nc{\bmW}{{\bm W}} \nc{\bmX}{{\bm X}}
\nc{\bmY}{{\bm Y}} \nc{\bmZ}{{\bm Z}}
\nc{\cA}{{\cal A}} \nc{\cB}{{\cal B}} \nc{\cC}{{\cal C}}
\nc{\cD}{{\cal D}} \nc{\cE}{{\cal E}} \nc{\cF}{{\cal F}}
\nc{\cG}{{\cal G}} \nc{\cH}{{\cal H}} \nc{\cI}{{\cal I}}
\nc{\cJ}{{\cal J}} \nc{\cK}{{\cal K}} \nc{\cL}{{\cal L}}
\nc{\cM}{{\cal M}} \nc{\cN}{{\cal N}} \nc{\cO}{{\cal O}}
\nc{\cP}{{\cal P}} \nc{\cQ}{{\cal Q}} \nc{\cR}{{\cal R}}
\nc{\cS}{{\cal S}} \nc{\cT}{{\cal T}} \nc{\cU}{{\cal U}}
\nc{\cV}{{\cal V}} \nc{\cW}{{\cal W}} \nc{\cX}{{\cal X}}
\nc{\cY}{{\cal Y}} \nc{\cZ}{{\cal Z}}
\nc{\hA}{{\hat{A}}} \nc{\hB}{{\hat{B}}} \nc{\hC}{{\hat{C}}}
\nc{\hD}{{\hat{D}}} \nc{\hE}{{\hat{E}}} \nc{\hF}{{\hat{F}}}
\nc{\hG}{{\hat{G}}} \nc{\hH}{{\hat{H}}} \nc{\hI}{{\hat{I}}}
\nc{\hJ}{{\hat{J}}} \nc{\hK}{{\hat{K}}} \nc{\hL}{{\hat{L}}}
\nc{\hM}{{\hat{M}}} \nc{\hN}{{\hat{N}}} \nc{\hO}{{\hat{O}}}
\nc{\hP}{{\hat{P}}} \nc{\hR}{{\hat{R}}} \nc{\hS}{{\hat{S}}}
\nc{\hT}{{\hat{T}}} \nc{\hU}{{\hat{U}}} \nc{\hV}{{\hat{V}}}
\nc{\hW}{{\hat{W}}} \nc{\hX}{{\hat{X}}} \nc{\hZ}{{\hat{Z}}}
\nc{\hn}{{\hat{n}}}
\def\max{\mathop{\rm max}}
\def\min{\mathop{\rm min}}
\def\ox{\otimes}
\def\ra{\rightarrow}
\begin{document}

\preprint{APS/123-QED}

\title{Constructing unextendible product bases from multiqubit ones}



\author{Taiyu Zhang} 
\email[]{2241717357@qq.com}
\affiliation{159th Middle School of Beijing, Xuanwu Men West Street, Xicheng District, Beijing, 100031, China}

\author{Lin Chen}
\email[]{linchen@buaa.edu.cn(corresponding author)}
\affiliation{LMIB and School of Mathematical Sciences, Beihang University, Beijing 100191, China}
\affiliation{International Research Institute for Multidisciplinary Science, Beihang University, Beijing 100191, China}

\date{\today}

\Large

\begin{abstract}
The construction of multipartite unextendible product bases (UPBs) is a basic problem in quantum information. We respectively construct two families of $2\times2\times4$ and $2\times2\times2\times4$ UPBs of size eight by using the existing four-qubit and five-qubit UPBs. As an application, we construct novel families of multipartite positive-partial-transpose entangled states, as well as their entanglement properties in terms of the geometric measure of entanglement. 
\end{abstract}

                             
\maketitle


\section{Introduction}
\label{sec:int}

The unextendible product bases (UPBs) have been applied to various quantum-information processing in the past decades. First, UPBs help construct positive-partial-transpose (PPT) entangled states
\cite{BDF+99,dms03,Pittenger2003Unextendible,Terhal2001A,SLJM10}, the nonlocality without entanglement and Bell inequalities  \cite{ ASH+11,AL01,Fen06,Chen2013The, Tura2012Four,Szanto2016Complementary,Chen2014Unextendible,Hou2014A,2021Strongly,Shi2021}, genuine entanglement \cite{PhysRevA.98.012313}, and Schmidt number \cite{PhysRevA.90.054303}. 
Multipartite UPBs have been related to the tile structures and local entanglement-assisted distinguishability
\cite{2020Unextendible}. Further, multiqubit UPBs have received extensive attention \cite{Reed2012Realization,Joh13} due to the role of multiqubit system in quantum computing and  experiments \cite{Dicarlo2010Preparation}. For example, the three and four-qubit UPBs have been fully studied in terms of programs 
 \cite{Bravyi2004Unextendible,Johnston2014The}. Some seven-qubit UPBs of size ten have been
 constructed \cite{2020The}. Further the exclusion of $n$-qubit UPBs of size $2^n-5$  \cite{Chen2018Nonexistence} has solved an open problem in \cite{Johnston2014The}. Nevertheless, the understanding of multipartite UPBs is still far from a complete picture. In particular, 
 It is an intriguing problem to find the relation between multiqubit UPBs and multipartite UPBs in high dimensions, because the latter are usually harder to construct. This is the first motivation of this paper.
 
On the other hand, it is known that PPT entangled states are not distillable. That is, they cannot be asymptotically converted into Bell states under local operations and classical communications (LOCC), while Bell states are necessary for most quantum-information tasks such as teleportation and computing. In contrast, some PPT entangled states can be used to create distillable key \cite{hhh05}. the PPT entangled states are related to the entanglement distillation problem \cite{2021A,Sun2020}, the detection of entanglement \cite{Shen2020,2021Detection}, as well as multipartite genuinely entangled states and entanglement-breaking subspaces \cite{Sun2020-2,Shen2020-2}. 
The study of multipartite PPT states has also been devoted to the separability of completely symmetric states \cite{2019Separability}, the separability of symmetric states and vandermonde decomposition \cite{Qian2020}. Hence, it is an important problem to construct novel PPT entangled states from UPBs, which may evidently motivated the study of fore-mentioned topics. This is the second motivation of this paper.  

In this paper, we construct a family of $2\times2\times4$ UPBs of size eight by using the merge of some systems of existing four-qubit UPBs of size eight in Theorem \ref{thm:2x2x2x2->2x2x4}, as well as a family of $2\times2\times2\times4$ UPBs of size eight by using the existing five-qubit UPBs of size eight in Theorem \ref{thm:2x2x2x2x2->2x2x2x4}. In particular, we shall show that the resulting set by any one of the four merge AD, BC, BD, and CD in Eq. \eqref{eq:01} is not a $2\times2\times4$ UPB, and the resulting set by any one of the two merge AB and AC in \eqref{eq:01} is a $2\times2\times4$ UPB of size eight. Further, the resulting set by any one of the four merge AB, CD, CE and DE in \eqref{eq:04} is not a $2\times2\times2\times4$ UPB, and the resulting set by any one of the six merge AC, AD, AE, BC, BD and BE in \eqref{eq:04} is a $2\times2\times2\times4$ UPB of size eight. As an application, we construct two families of multipartite PPT entangled states, and evaluate their entanglement in terms of geometric measurement of entanglement. In particular, we construct the upper bound of one family of tripartite PPT entangled states in terms of its parameters. Our work shows the latest progress on the construction of multipartite UPBs and PPT entangled states by means of UPBs.


The rest of this paper is organized as follows. In Sec. \ref{sec:pre} we introduce the preliminary knowledge used in this paper. In Sec. \ref{sec:res224} and \ref{sec:res2224}, we respectively construct a family of $2\times2\times4$ and $2\times2\times2\times4$ UPBs. Using them, we establish two families of PPT entangled states, and explore the properties of their entanglement in Sec. \ref{sec:app}. Finally we conclude in Sec. \ref{sec:con}.

\section{Preliminaries}
\label{sec:pre}

In this section we introduce the preliminary knowledge used in this paper. We work with an $n$-partite quantum system $A_1,A_2,\cdots,A_n$ in the Hilbert space $\cH=\cH_1\ox\cdots\ox\cH_n=\bbC^{d_1}\otimes...\otimes\bbC^{d_n}$.
An $n$-partite product vector in 
$\cH$ is denoted as $\ket{\psi}=\ket{a_1}\ox\cdots\ox\ket{a_n}:=\ket{a_1,\ldots,a_n}\quad \text{with}\quad \ket{a_i}\in\cH_i$. Although the normalization factor is necessary for the interpretation of quantum states, we do not always normalize product vectors for the sake of mathematical convenience. Next, a set of $n$-partite orthonormal product vectors $\{\ket{a_{i,1}},...,\ket{a_{i,n}}\}$ is an {\em unextendible product basis} (UPB) in $\cH$, if there is no $n$-partite product vector orthogonal to all product vectors in the set at the same time. In particular, the $n$-qubit UPB exists when $d_i=2$ for any $i$. For simplicity when $n=5$, we refer to $A_1,A_2,A_3,A_4$ and $A_5$ as $A,B,C,D$ and $E$, respectively. We have
 $\bbC^2\ox\bbC^2\ox\bbC^2\ox\bbC^2\ox\bbC^2:= \cH_A\ox\cH_B\ox\cH_C\ox\cH_D\ox\cH_E$, and $\cH_{AB}:=\cH_A\ox\cH_B$, and so on. We take the vectors $\ket{0}:=\bma1\\0\ema$ and $\ket{1}:=\bma0\\1\ema$, so that the set $\{\ket{0}, \ket{1}\}$ is a qubit orthonormal basis in $\bbC^2$. We shall denote a general qubit orthonormal basis by $\{\ket{x},\ket{x'}\}$, with $x=a,b,c$ and so on. It is straightforward to verify that, a UPB remains a UPB if we switch the systems or perform any local unitary transformation \cite{2020The}. We say that two UPBs are equivalent if one UPB can be obtained from the other by the switch or local unitary transformation.

\section{Tripartite UPBs from four-qubit UPBs}
\label{sec:res224}

In this section, we construct a tripartite UPB of size eight, by using an existing four-qubit UPB $\cS$ of size eight in the space $\cH_A \otimes \cH_B \otimes \cH_C \otimes \cH_D$. The UPB $\cS$ was constructed in \cite{Johnston2014The}. For convenience, we describe the matrix of $\cS$ in \eqref{eq:00}. Note that the first row of \eqref{eq:00} means the product state $\ket{0,0,0,0}$ in $\cS$, and one can similarly figure out all elements in $\cS$. The matrix representation of a UPB has been used to characterize the four-qubit orthogonal product bases \cite{1751-8121-50-39-395301,Chen2018Multiqubit}, and four-qubit UPBs \cite{Chen2018The}. 
\begin{eqnarray}
\label{eq:00}
\bma
0 & 0 & 0 & 0\\
1 & a & a' & a \\
a & a' & 1 & a' \\
a' & 1 & a & b \\
0 & a & a' & 1 \\
1 & a & a' & a' \\
a & 1 & a & a \\
a' & a' & 1 & b' \\
\ema,
\\
\label{eq:01}
\bma
0 & 0 & 0 & 0\\
1 & a & a' & a\\
0 & a & a' & 1\\
1 & a & a' & a'\\
a & a' & 1 & a'\\
a' & 1 & a & b\\
a & 1 & a & a\\
a' & a' & 1 & b'\\
\ema.
\end{eqnarray}
By switching rows 3 and 5, and rows 4 and 6 in \eqref{eq:00}, we obtain \eqref{eq:01}.
We merge two of the systems A, B, C and D, so that \eqref{eq:01} corresponds to a set of $2\times2\times4$ orthonormal product vectors. For example, the merge of AB in \eqref{eq:01} implies the set $\{\ket{0,0}\ket{0}\ket{0},\ket{1,a}\ket{a'}\ket{a},...,\ket{a',a'}\ket{1}\ket{b}\}$. One can verify that there are six ways for the merge, namely AB, AC, AD, BC, BD, and CD. We present the following observation. 

\begin{theorem}
\label{thm:2x2x2x2->2x2x4}
The resulting set by any one of the four merge AD, BC, BD, and CD in \eqref{eq:01} is not a $2\times2\times4$ UPB. The resulting set by any one of the two merge AB and AC in \eqref{eq:01} is a $2\times2\times4$ UPB of size eight.
\end{theorem}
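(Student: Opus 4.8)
The plan is to split the statement into its two halves and reduce both to unextendibility. Merging two qubit factors does not change any inner product, so each of the six merges turns \eqref{eq:01} into a set of eight orthonormal product vectors in $\bbC^2\ox\bbC^2\ox\bbC^4$, and the set is a UPB exactly when there is no product vector $\ket u\ket v\ket w$, with $\ket u,\ket v\in\bbC^2$ on the two unmerged qubits and $\ket w\in\bbC^4$ on the merged pair, orthogonal to all eight rows.

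For the four merges AD, BC, BD, CD I would exhibit such an orthogonal product vector directly. Both unmerged columns of \eqref{eq:01} contain repeated entries, so a well-chosen qubit state on each unmerged factor annihilates a whole block of rows. For CD, the state $\ket{a'}$ on $A$ annihilates the rows with $A$-entry $a$ and $\ket{a'}$ on $B$ annihilates the rows with $B$-entry $a$; together these leave only three rows, whose $\bbC^4$-components are three vectors in a four-dimensional space and hence have a common nonzero orthogonal vector $\ket w$, so $\ket{a'}_A\ket{a'}_B\ket w_{CD}$ is orthogonal to the whole set. The merges AD and BD work the same way with the matching block-killing qubit states, while BC is even easier: rows $2,3,4$, rows $5,8$, and rows $6,7$ each share one $\bbC^4$-component, so a single $\ket w$ kills seven rows and one qubit state disposes of the eighth. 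In each case a product vector orthogonal to all eight members is produced, so the set is extendible.

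For AB and AC I would argue by contradiction. Suppose $\ket u\ket v\ket w$ with $\ket w\neq 0$ is orthogonal to every row; assign each of the eight rows to one of $T_1,T_2,T_3$, where $T_j$ collects rows annihilated by the $j$-th tensor factor ($j=1,2$ the qubit factors, $j=3$ the $\bbC^4$ factor). A qubit state is orthogonal only to rows whose entry in that column is a single fixed vector, so $T_1$ and $T_2$ each lie inside one ``equal-entry block'' of the corresponding unmerged column of \eqref{eq:01}; reading off \eqref{eq:01}, these blocks have size at most $3$ and at most $2$ respectively, and the crucial combinatorial fact is that the unique size-$3$ block meets every size-$2$ block of the other column, so $|T_1\cup T_2|\leq 4$ and therefore $|T_3|\geq 4$. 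On the other hand $\ket w\neq 0$ forces the $\bbC^4$-components of the rows in $T_3$ to span a proper subspace of $\bbC^4$. It then suffices to check, over the finitely many admissible pairs $(T_1,T_2)$, that the $\bbC^4$-components of the complementary (at least four) rows always span all of $\bbC^4$; for each configuration this comes down to the linear independence of four explicit product vectors built from $\ket 0,\ket 1,\ket a,\ket{a'}$, which holds under the genericity of the bases already needed for \eqref{eq:00} to be a four-qubit UPB. This contradiction shows AB and AC are UPBs.

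The main obstacle is this final rank check: one has to confirm, for both merges and in every admissible leftover $T_3$, that the relevant four $\bbC^4$-vectors are independent, and that the standing genericity hypotheses on $\{\ket a,\ket{a'}\}$ and $\{\ket b,\ket{b'}\}$ suffice. The delicate point is that rows $2$ and $4$ of \eqref{eq:01} coincide in their $\bbC^4$-component after the merge, so a leftover containing both really supplies only three vectors; one must verify that the block combinatorics above forbids any admissible $(T_1,T_2)$ from leaving such a $T_3$. Everything else is bookkeeping on the matrix \eqref{eq:01}.
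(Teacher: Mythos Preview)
Your plan matches the paper's approach: exhibit an explicit orthogonal product vector for the four extendible merges, and for AB and AC argue by contradiction, assigning each row to the tensor factor that annihilates it and reducing to a rank check on the merged $\bbC^4$-components. Your first-half witnesses differ from the paper's only in inessential choices (your BC observation that rows $2$--$8$ share only three distinct merged components is actually cleaner than what the paper does there). The paper also handles AB not by a parallel rank check but by exhibiting a row/column permutation and relabeling that reduces the AB merge to the AC case already treated.

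The gap is in the second half. Your claim that the final linear-independence check ``holds under the genericity of the bases'' is wrong. In the paper's execution (case~(V.b)), out of the $\binom{7}{4}=35$ four-element subsets of the seven distinct $\bbC^4$-components, exactly three have determinant \emph{identically} zero in the parameters --- no genericity hypothesis rescues them. The proof succeeds only because none of those three singular subsets arises as the complement of an admissible block pair $(T_1,T_2)$, and verifying this requires the separate matching step the paper carries out. Your ``delicate point'' is also mis-stated: admissible pairs \emph{can} leave both rows $2$ and $4$ in $T_3$ (for instance $T_1=\{5,8\}$, $T_2=\{1\}$ gives $T_3=\{2,3,4,6,7\}$); what is actually excluded is only the configuration $|T_3|=4$ with $\{2,4\}\subseteq T_3$, while for $|T_3|\ge 5$ one still has at least four distinct $\bbC^4$-vectors whose span must be checked individually. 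So the case check you correctly flag as the ``main obstacle'' does not dissolve under genericity; it needs the determinant computation plus the combinatorial matching that the paper actually performs.
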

\begin{proof}
We prove the claim by six cases \textbf{(I)-(VI)}. They respectively work for the system merge CD, BD, BC, AD, AC and AB in \eqref{eq:01}. Using the merge, we shall refer to the set of orthonormal product vectors corresponding to \eqref{eq:01} as $\cS_1,...,\cS_6$, respectively, in the six cases \textbf{(I)-(VI)}. Further, if any two systems of $A,B,C,D$ merge then evidently they are in $\bbC^4$. For example, the merge of CD makes the set of four-qubit orthonormal vectors corresponding to \eqref{eq:01} become $2\times2\times4$ orthonormal vectors in $\bbC^2\otimes\bbC^2\otimes\bbC^4=\cH_A\otimes\cH_B\otimes\cH_{CD}$.

(I)	When we merge the system CD, one can verify that the set $\cS_1$ in \eqref{eq:01} is orthogonal to $\ket{a, a', x_1}_{A: B: CD} \in \cH_A \otimes \cH_B \otimes \cH_{CD}$, where $\ket{x_1}$ is orthogonal to $\ket{0,0}$, $\ket{1,a'}$ and $\ket{a,a}$. So $\cS_1$ is not a UPB in $\cH_A \otimes \cH_B \otimes \cH_{CD}$.

(II) When we merge the system BD, one can verify that the set $\cS_2$ in \eqref{eq:01} is orthogonal to $\ket{a', a, x_2}_{A: C: BD} \in \cH_A \otimes \cH_C \otimes \cH_{BD}$, where $\ket{x_2}$ is orthogonal to $\ket{0,0}$, $\ket{1,b}$ and $\ket{a',b'}$. So $\cS_2$ is not a UPB in $\cH_A \otimes \cH_C \otimes \cH_{BD}$.

(III) When we merge the system BC, one can verify that the set $\cS_3$ in \eqref{eq:01} is orthogonal to $\ket{1, a, x_3}_{A: D: BC} \in \cH_A \otimes \cH_D \otimes \cH_{BC}$, where $\ket{x_3}$ is orthogonal to $\ket{a,a'}$, $\ket{1,a}$ and $\ket{a',1}$. So $\cS_3$ is not a UPB in $\cH_A \otimes \cH_D \otimes \cH_{BC}$.

(IV) When we merge the system AD, suppose the set $\cS_4$ in \eqref{eq:01} is orthogonal to $\ket{0, a, x_4}_{B: C: AD} \in \cH_B \otimes \cH_C \otimes \cH_{AD}$, one can verify that $\ket{x_4}$ is orthogonal to $\ket{0,0}$, $\ket{a,a'}$ and $\ket{a',b'}$. So $\cS_4$ is not a UPB in $\cH_B \otimes \cH_C \otimes \cH_{AD}$.

(V)	Before carrying out the proof, we label $a,a'$ in column three and four of \eqref{eq:01} as $a_3,a_3'$ and $a_4,a_4'$, respectively. To prove the assertion, it suffices to find some $2\times2\times4$ UPBs in \eqref{eq:01} by merging system AC. So 
we construct the following expressions related to \eqref{eq:01}.
\begin{eqnarray}
\label{eq:ket0}
&&
\ket{0} = \bma 1 \\ 0 \ema, 
\quad
\ket{1} = \bma 0 \\ 1 \ema,
\\&&
\ket{a_3} = \bma \cos x_3 \\ \sin x_3 \ema,
\quad
\ket{a_3'} = \bma \sin x_3 \\ -\cos x_3 \ema,
\\&&
\ket{a_4} = \bma \cos x_4 \\ \sin x_4 \ema,
\quad
\ket{a_4'} = \bma \sin x_4 \\ -\cos x_4 \ema, 
\\&&
\ket{c_1} = \ket{c_2} =\ket{1, a_4'} = \bma 0 \\ 0 \\ \sin x_4 \\ -\cos x_4 \ema,
\\&&
\ket{c_3} = \ket{a_3, 1} = \bma 0 \\ \cos x_3 \\ 0 \\ \sin x_3 \ema,
\\&&
\ket{c_4} = \ket{a_3', a_4} = \bma \sin x_3 \cos x_4  \\ \sin x_3 \sin x_4 \\ -\cos x_3 \cos x_4 \\ -\cos x_3  \sin x_4 \ema,
\\&&
\ket{c_5} = \ket{0, a_4'} = \bma \sin x_4  \\ -\cos x_4 \\ 0 \\ 0 \ema,
\\&&
\ket{c_6} =\ket{0, 0} = \bma 1 \\ 0 \\ 0 \\ 0 \ema,
\\&&
\ket{c_7} = \ket{a_3, a_4} = \bma \cos x_3 \cos x_4 \\ \cos x_3 \sin x_4  \\ \sin x_3 \cos x_4  \\ \sin x_3 \sin x_4  \ema,
\\&&
\ket{c_8} = \ket{a_3', 1} = \bma 0 \\ \sin x_3  \\ 0 \\ -\cos x_3  \ema.
\label{eq:c8}
\end{eqnarray}
Using these equations, we construct the $4\times7$ matrix $(\ket{c_2},\ket{c_3}, ......,\ket{c_8})$ as follows.
\begin{eqnarray}
\bma
0 & 0 & \sin x_3 \cos x_4 & \sin x_4 & 1 & \cos x_3 \cos x_4 & 0\\
0 & \cos x_3 & \sin x_3 \sin x_4 & -\cos x_4 & 0 & \cos x_3 \sin x_4 & \sin x_3\\
\sin x_4 & 0 & -\cos x_3 \cos x_4 & 0 & 0 & \sin x_3 \cos x_4 & 0\\
-\cos x_4 & \sin x_3 & -\cos x_3 \sin x_4 & 0 & 0 & \sin x_3 \sin x_4 & -\cos x_3\\
\ema.
\end{eqnarray}
We merge the system A,C, and suppose that the set $\cS_5=\{\ket{a_j,b_j,c_j},j=1,...,8\}$ in \eqref{eq:01} is orthogonal to 
\begin{eqnarray}
\label{eq:uvx5}
\ket{u, v, x_5}_{B: D: AC} \in \cH_B \otimes \cH_D \otimes \cH_{AC}. \end{eqnarray}

Hence $\ket{u,v}$ is orthogonal to some of the eight two-qubit product vectors $\ket{a_1,b_1}$,…,$\ket{a_8,b_8}$. By checking the expression of \eqref{eq:01}, one can show that each column of \eqref{eq:01} has at most three identical elements. In particular, the set $\{\ket{a_j}\}$ has at most three identical elements, and the set $\{\ket{b_j}\}$ has at most two identical elements. As a result, $\ket{u, v}$ is orthogonal to at most five of the eight product vectors $\ket{a_1,b_1}$,…,$\ket{a_8,b_8}$. We discuss three cases (V.a), (V.b) and (V.c). 

(V.a) Up to the permutation of subscripts in $\cS_5$, we may assume that $\ket{u, v}$ is orthogonal to exactly five product vectors $\ket{a_1, b_1}$, $\ket{a_2, b_2}$,..., $\ket{a_5, b_5}$. That is, $\ket{u}$ is orthogonal to $m$ of the five vectors $\ket{a_1}$, $\ket{a_2}$, ..., $\ket{a_5}$, e.g. $\ket{a_1}$, ..., $\ket{a_m}$ and $\ket{v}$ is orthogonal to $\ket{b_{m+1}}$, ..., $\ket{b_5}$. Using the property of set $\cS_5$ in \eqref{eq:01}, one can show that $m=3$, that is, $\ket{u}$ is orthogonal to three identical vectors $\ket{a_1}, \ket{a_2}$ and $\ket{a_3} \in \cH_B$, $\ket{v}$ is orthogonal to two identical vectors $\ket{b_3}$ and $\ket{b_4} \in \cH_D$. However the product vectors $\ket{a_1, b_1}$, ..., $\ket{a_4, b_4}$ do not occupy five rows of the matrix of $\cS$. So $\ket{u, v}$ is not orthogonal to five product vectors in $\ket{a_j, b_j}$, which means this case is not possible.

(V.b) Up to the permutation of subscripts in $\cS_5$, we may assume that $\ket{u, v}$ is orthogonal to exactly four product vectors $\ket{a_1, b_1}$, $\ket{a_2, b_2}$, $\ket{a_3, b_3}$, and $\ket{a_4, b_4}$. So $\ket{x_5}$ is orthogonal to four of the eight product vectors $\ket{c_1}$, ... , $\ket{c_8}$. By observing them, one can assume that $\ket{c_1}$=$\ket{c_2}$. Let the four product vectors be $\ket{c_{j_1} }$, $\ket{c_{j_2} }$, $\ket{c_{j_3} }$, $\ket{c_{j_4} }$, respectively, where $j_1, j_2, j_3, j_4$ are distinct integers in [2,8]. So there are 35 distinct matrices, which are formed by choosing the arrays $(j_1, j_2, j_3, j_4)$ as $(2,3,4,5), (2,3,4,6)$,...,$(5,6,7,8)$, respectively. Because $\ket{x_5}$ is orthogonal to $\ket{c_{j_1} }$, $\ket{c_{j_2} }$, $\ket{c_{j_3} }$, $\ket{c_{j_4} }$, we obtain that the matrices have determinant zero. According to our calculation, there are exactly three matrices of determinant zero, whose arrays are (2,3,5,7), (2,4,5,8), (3,5,6,8), respectively. However by checking $\cS_5$ in \eqref{eq:01}, one can see that for each array, there is no $\ket{u,v,x_5}$ orthogonal to $\cS_5$. It is a contradiction with \eqref{eq:uvx5}.

(V.c) Up to the permutation of subscripts in $\cS_5$, we may assume that $\ket{u, v}$ is orthogonal to at most three product vectors $\ket{a_1, b_1}$, $\ket{a_2, b_2}$, $\ket{a_3, b_3}$. So $\ket{x_5}$ is orthogonal to five of the eight product vectors $\ket{c_1}$, ... , $\ket{c_8}$. By observing them, one can assume that $\ket{c_1}=\ket{c_2}$. Let the five product vectors be $\ket{c_{j_1}}$, $\ket{c_{j_2}}$, $\ket{c_{j_3}}$, $\ket{c_{j_4}}$, $\ket{c_{j_5}}$, respectively, where $j_1, j_2, j_3, j_4, j_5$ are distinct integers in [2,8]. Because they are orthogonal to $\ket{x_5}$, they are linearly dependent. Hence, any four of $\ket{c_{j_1}}$, $\ket{c_{j_2}}$, $\ket{c_{j_3}}$, $\ket{c_{j_4}}$, $\ket{c_{j_5}}$ form a 4x4 matrix of determinant zero. So at least five of the 35 matrices in (V.b) have determinant zero. It is a contradiction with the calculation in (V.b), namely there are only three matrices of determinant zero. Hence this case is not possible. 

To conclude, we have excluded the three cases (V.a), (V.b), and (V.c). So the set $\cS_5=\{\ket{a_j,b_j,c_j},j=1, ..., 8\}$ in \eqref{eq:01} is a $ 2 \times 2 \times 4$ UPB in $\cH_B \otimes \cH_D \otimes \cH_{AC}$.

(VI) We convert the set $\cS_6$ in \eqref{eq:01} in the space $\cH_C \otimes \cH_D \otimes \cH_{AB}$ into the UPB $\cS_5$ in case (V), by switching column 2 and 3 and a series of row permutations on the matrix \eqref{eq:01}. It implies that the matrix $\cS$ is also a UPB $\in \cH_C \otimes \cH_D \otimes \cH_{AB}$. We list the switch and permutations on \eqref{eq:01} as follows. 
\begin{eqnarray}
\bma
0 & 0 & 0 & 0\\
1 & a_2 & a_3' & a_4\\
0 & a_2 & a_3' & 1\\
1 & a_2 & a_3' & a_4'\\
a_1 & a_2' & 1 & a_4'\\
a_1' & 1 & a_3 & b_4\\
a_1 & 1 & a_3 & a_4\\
a_1' & a_3' & 1 & b_4'\\
\ema,
\end{eqnarray}
$\ra$ switching  rows  3  and  5,  and  rows  4  and  6 $\ra$
\begin{eqnarray}
\bma
0 & 0 & 0 & 0 \\
1 & a_2 & a_3' & a_4\\
a_1 & a_2' & 1 & a_4'\\	
a_1' & 1 & a_3 & b\\
0 & a_2 & a_3' & 1\\
1 & a_2 & a_3' & a_4'\\
a_1 & 1 & a_3 & a_4\\
a_1' & a_2' & 1 & b'\\	
\ema,
\end{eqnarray}
$\ra$ switch row 1 and 6 $\ra$
\begin{eqnarray}
\label{eq:02}
\bma
1 & a_2 & a_3' & a_4'\\
1 & a_2 & a_3' & a_4\\
a_1 & a_2' & 1 & a_4'\\
a_1' & 1 & a_3 & b\\
0 & a_2 & a_3' & 1\\
0 & 0 & 0 & 0\\
a_1 & 1 & a_3 & a_4\\
a_1' & a_2' & 1 & b'\\
\ema,
\end{eqnarray}
$\ra$ switch column 2 and 3 $\ra$
\begin{eqnarray}
\bma
1 & a_3' & a_2 & a_4'\\
1 & a_3' & a_2 & a_4\\
a_1 & 1 & a_2' & a_4'\\
a_1' & a_3 & 1 & b\\
0 & a_3' & a_2 & 1\\
0 & 0 & 0 & 0\\
a_1 & a_3 & 1 & a_4\\
a_1' & 1 & a_2' & b'\\
\ema
\in \cH_{ACBD},
\end{eqnarray}
$\ra$ switch row 1 and 2, 3 and 7, 4 and 8 $\ra$
\begin{eqnarray}
\bma
1 & a_3' & a_2 & a_4\\
1 & a_3' & a_2 & a_4'\\
a_1 & a_3 & 1 & a_4\\
a_1' & 1 & a_2' & b'\\
0 & a_3' & a_2 & 1\\
0 & 0 & 0 & 0\\
a_1 & 1 & a_2' & a_4'\\
a_1' & a_3 & 1 & b\\
\ema.
\end{eqnarray}
In column 2 and 3 of this matrix, we switch $ (a_3', a_3)$ and $(a_2, a_2')$, in column 4 we rename $(a_4, a_4', b, b')$ as $(a_4', a_4, b', b)$. So we obtain 
\begin{eqnarray}
\label{eq:03}
\bma
1 & a_2 & a_3' & a_4'\\
1 & a_2 & a_3' & a_4\\
a_1 & a_2' & 1 & a_4'\\
a_1' & 1 & a_3 & b\\
0 & a_2 & a_3' & 1\\
0 & 0 & 0 & 0\\
a_1 & 1 & a_3 & a_4\\
a_1' & a_2' & 1 & b'\\
\ema
\in \cH_B \otimes \cH_D \otimes \cH_{AC},
\end{eqnarray}
where the final matrix \eqref{eq:03} is exactly the matrix $\cS_5$ in \eqref{eq:01}. So the set $\cS_6$ in \eqref{eq:01} is a $2\times2\times4$ UPB in $\cH_C \otimes \cH_D \otimes \cH_{AB}$.

In conclusion, we have proven the assertion in terms of the six cases (I)-(VI). This completes the proof.
\end{proof}

\section{Four-partite UPBs from five-qubit UPBs}
\label{sec:res2224}

In this section, we construct a four-partite UPB of size eight, by using an existing five-qubit UPB $\cT$ of size eight in the space $\cH_A \otimes \cH_B \otimes \cH_C \otimes \cH_D \otimes \cH_E$. The UPB $\cT$ was constructed in \cite{Johnston2014The}. For convenience, we describe the matrix of $\cT$ in \eqref{eq:04}. Note that the first row of \eqref{eq:04} means the product state $\ket{0,0,0,0,0}\in \cT$, and one can similarly figure out all elements in $\cT$.  
\begin{eqnarray}
\label{eq:04}
\bma
0 & 0 & 0 & 0 & 0\\
0 & 0 & 1 & a_4 & a_5\\
a_1 & a_2 & a_3 & 1 & a_5'\\
a_1 & a_2 & a_3' & a_4' & 1\\
1 & a_2' & b_3 & b_4 & b_5\\
1 & a_2' & b_3' & c_4 & c_5\\
a_1' & 1 & c_3 & b_4' & c_5'\\
a_1' & 1 & c_3' & c_4' & b_5'\\
\ema.
\end{eqnarray}
We merge two of the systems A, B, C, D and E, so that \eqref{eq:04} corresponds to a set of $2\times2\times2\times4$ orthogonal product vectors. For example, the merge of AB in \eqref{eq:04} implies the set $\{\ket{0,0}\ket{0}\ket{0}\ket{0},...,\ket{a_1',1}\ket{c_3'}\ket{c_4'}\ket{b_5'}\}$. One can verify that there are ten ways for the merge, namely AB, AC, AD, AE, BC, BD, BE, CD, CE and DE. We present the following observation. 

\begin{theorem}
\label{thm:2x2x2x2x2->2x2x2x4}
Suppose \eqref{eq:04} is real. The resulting set by any one of the four merge AB, CD, CE and DE in \eqref{eq:04} is not a $2\times2\times2\times4$ UPB. The resulting set by any one of the six merge AC, AD, AE, BC, BD and BE in \eqref{eq:04} is a $2\times2\times2\times4$ UPB of size eight.
\end{theorem}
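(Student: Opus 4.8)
The plan is to split the ten merges into the four ``negative'' ones ($AB$, $CD$, $CE$, $DE$) and the six ``positive'' ones ($AC$, $AD$, $AE$, $BC$, $BD$, $BE$), and to lean throughout on the rigid shape of the columns of \eqref{eq:04}: columns $A$ and $B$ each take only four values, no two proportional, repeated exactly on the row-pairs forming the partition $\mathcal{P}=\{\{1,2\},\{3,4\},\{5,6\},\{7,8\}\}$, while columns $C$, $D$, $E$ each have eight pairwise non-proportional entries. These non-degeneracy facts hold for the real UPB $\cT$, and the realness hypothesis is used only through them.

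For the negative merges I would exhibit an orthogonal product vector directly, in the style of cases (I)--(IV) in the proof of Theorem~\ref{thm:2x2x2x2->2x2x4}. For $AB$ the combined $\bbC^4$ factor takes only the four values $\ket{0,0}$, $\ket{a_1,a_2}$, $\ket{1,a_2'}$, $\ket{a_1',1}$; a vector in $\cH_{AB}$ orthogonal to the first three annihilates rows $1$ through $6$, and a qubit on $C$ orthogonal to the $C$-entry of row $7$ together with a qubit on $D$ orthogonal to the $D$-entry of row $8$ completes the vector. For each of $CD$, $CE$, $DE$ the two unmerged structured systems are $A$ and $B$: take a qubit on $A$ annihilating one block of $\mathcal{P}$ and a qubit on $B$ annihilating a \emph{different} block (four rows killed), a qubit on the remaining generic system killing one more row, and then a vector in $\bbC^4$ orthogonal to the three surviving merged vectors, which exists and is nonzero since any three vectors in $\bbC^4$ admit a common nonzero orthogonal vector. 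Checking that these factors are nonzero and jointly annihilate all eight rows is routine.

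The substantive part is the six positive merges, where I expect one uniform argument. Such a merge combines a structured system $X\in\{A,B\}$ with a generic system $Y\in\{C,D,E\}$; the unmerged systems are the other structured system $X'$ and two generic systems $Y'$, $Y''$, and the key point is that the blocks of column $X'$, the blocks of column $X$, and hence the four product planes $\ket{X_i}\otimes\cH_Y$ inside $\cH_{XY}\cong\bbC^4$ all correspond to the \emph{same} partition $\mathcal{P}$ of the eight rows. Suppose a product vector $\ket{u}_{X'}\ket{v}_{Y'}\ket{w}_{Y''}\ket{x}_{XY}$ is orthogonal to all eight members. If $\ket{x}_{XY}$ has Schmidt rank $1$ across the $X|Y$ cut, the whole vector is a genuine five-qubit product vector orthogonal to $\cT$, contradicting that $\cT$ is a UPB; hence $\ket{x}_{XY}$ has Schmidt rank $2$. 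Then $\ket{x}_{XY}$ cannot be orthogonal to an entire plane $\ket{X_i}\otimes\cH_Y$, since the orthogonal complement of such a plane consists of product vectors of Schmidt rank at most $1$; therefore $\ket{x}_{XY}$ annihilates at most one row in each block of $\mathcal{P}$, i.e.\ at most four rows in all. On the other hand $\ket{u}_{X'}$ annihilates either no row or exactly one whole block of $\mathcal{P}$, and each of $\ket{v}_{Y'}$, $\ket{w}_{Y''}$ annihilates at most one row. Covering all eight rows thus forces equality everywhere: $\ket{x}_{XY}$ must kill a transversal of $\mathcal{P}$ (one row per block) and $\ket{u}_{X'}$ must kill a full block of $\mathcal{P}$ --- but then these two sets intersect, a contradiction. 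Hence each of the six merges yields a UPB.

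The step I expect to need the most care is not the counting, which is immediate once the picture is in place, but isolating and verifying the non-degeneracy inputs: that no two entries within any of the columns $C$, $D$, $E$ are proportional (in particular the pairs built from the $b$'s and $c$'s in rows $5$ through $8$), and that columns $A$, $B$ genuinely present four distinct directions. This is precisely where the assumption that \eqref{eq:04} is a real UPB is doing its work, and it should be pinned down before the uniform counting argument is invoked. A secondary chore, in the negative cases, is to check that the qubit choices do not accidentally annihilate extra rows and thereby spoil the count.
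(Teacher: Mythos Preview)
Your treatment of the four negative merges is essentially the paper's: in each case you exhibit an explicit orthogonal product vector by letting the merged $\bbC^4$ factor handle several rows and cleaning up the rest with qubit factors. (Minor remark: for the $AB$ merge there are three unmerged qubit systems $C,D,E$, not two; your description names only $C$ and $D$, but since the $E$ factor can be arbitrary this is harmless.)

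For the six positive merges your approach is genuinely different from the paper's and considerably cleaner. The paper treats the six merges one at a time: in each case it writes the eight merged vectors in $\bbC^4$ explicitly, argues (using exactly the non-degeneracy facts you flag) that the qubit factor $\ket{u,v,w}$ can annihilate at most four rows, and then asserts---on the strength of a Matlab computation---that no four of the eight merged vectors form a singular $4\times4$ matrix, so the $\bbC^4$ factor cannot pick up the remaining four. Your argument replaces this computation by the Schmidt-rank observation: a rank-$2$ vector in $\cH_X\otimes\cH_Y$ cannot lie in any product plane $\ket{X_i}\otimes\cH_Y$, hence it annihilates at most one row in each block of $\cP$; together with the fact that the unmerged structured column $X'$ is constant on the \emph{same} blocks $\cP$, the covering inequality $4+2+1+1\ge8$ forces disjointness, and the transversal hit by $\ket{x}_{XY}$ necessarily meets the block hit by $\ket{u}_{X'}$. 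This is a uniform, computation-free proof for all six merges at once, and it makes transparent why the answer depends only on whether the merge mixes a ``structured'' column ($A$ or $B$) with a ``generic'' one ($C$, $D$, or $E$).

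Both arguments rest on the same non-degeneracy inputs (the four directions in each of columns $A,B$ are pairwise non-proportional; the eight entries in each of columns $C,D,E$ are pairwise non-proportional). The paper simply asserts these; you are right to single them out as the point that actually needs checking, and your remark that the UPB hypothesis is what forces them is the correct direction---for instance, $b_4=c_4$ would already allow a five-qubit product vector orthogonal to all of $\cT$. Pinning these down carefully is the only real work left in your outline.
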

\begin{proof}
We prove the claim by ten cases (I)-(X). They respectively work for the system merge AB, CD, CE, DE, AC, AD, AE, BC, BD and BE in \eqref{eq:04}. Using the merge, we shall refer to the set of orthonormal product vectors corresponding to \eqref{eq:04} as $\cT_1,...,\cT_{10}$, respectively, in the ten cases (I)-(X). Further, if any two systems of A, B, C, D, E merge then
evidently they are in $\bbC^4$
. For example, the merge of DE makes the set of four-qubit
orthonormal vectors corresponding to \eqref{eq:04} become $2 \times 2 \times 4$ orthonormal vectors in $\bbC^2\otimes\bbC^2\otimes\bbC^2\otimes\bbC^4=\cH_A\otimes\cH_B\otimes\cH_C\otimes\cH_{DE}$.

(I)	When we merge the system AB, one can verify that the set $\cT_1$ in \eqref{eq:04} is orthogonal to $\ket{1, a_4', a_5, y_4}_{C: D: E: AB} \in \cH_C \otimes \cH_D \otimes \cH_E \otimes \cH_{AB},$ where $\ket{y_4}$ is orthogonal to $\ket{a_1,a_2}$, $\ket{1,a_2'}$ and $\ket{a_1',1}$. So $\cT_1$ is not a UPB in $\cH_C \otimes \cH_D \otimes \cH_E \otimes \cH_{AB}$.

(II) When we merge the system CD, one can verify that the set $\cT_2$ in \eqref{eq:04} is orthogonal to $\ket{a_1', a_2, c_5, y_1}_{A: B: E: CD} \in \cH_A \otimes \cH_B \otimes \cH_E \otimes \cH_{CD}$, where $\ket{y_1}$ is orthogonal to $\ket{0,0}$, $\ket{1,a_4}$ and $\ket{c_3',c_4'}$. So $\cT_2$ is not a UPB in $\cH_A \otimes \cH_B \otimes \cH_E \otimes \cH_{CD}$.

(III) When we merge the system CE, one can verify that the set $\cT_3$ in \eqref{eq:04} is orthogonal to $\ket{a_1', a_2, b_4, y_2}_{A: B: D: CE} \in \cH_A \otimes \cH_B \otimes \cH_D \otimes \cH_{CE}$, where $\ket{y_2}$ is orthogonal to $\ket{0,0}$, $\ket{1,a_5}$ and $\ket{c_3',b_5'}$. So $\cT_3$ is not a UPB in $\cH_A \otimes \cH_B \otimes \cH_D \otimes \cH_{CE}$.

(IV)	When we merge the system DE, one can verify that the set $\cT_4$ in \eqref{eq:04} is orthogonal to $\ket{a_1', a_2, c_3', y_3}_{A: B: C: DE} \in \cH_A \otimes \cH_B \otimes \cH_C \otimes \cH_{DE}$, where $\ket{y_3}$ is orthogonal to $\ket{0,0}$, $\ket{a_4,a_5}$ and $\ket{c_4',b_5'}$. So $\cT_4$ is not a UPB in $\cH_A \otimes \cH_B \otimes \cH_C \otimes \cH_{DE}$.

In the following, we investigate the merge of system AC, AD, AE, BC, BD and BE in \eqref{eq:04}, respectively. For this purpose, we consider the positive numbers $x_j,y_j,w_j \in (0, \pi/2)$ and
\begin{eqnarray}
\label{eq:05}
&&
\ket{ 0 } = \bma 1 \\ 0 \ema,
\quad
\ket{ 1 } = \bma 0 \\ 1 \ema,
\\&&
\ket{ a_1 } = \bma \cos x_1  \\ \sin x_1  \ema,
\quad
\ket{ a_1' } = \bma \sin x_1  \\ -\cos x_1  \ema,
\notag\\&&
\ket{ a_2 } = \bma \cos x_2  \\ \sin x_2  \ema,
\quad
\ket{ a_2' } = \bma \sin x_2  \\ -\cos x_2  \ema,
\notag\\&&
\ket{ a_3 } = \bma \cos x_3  \\ \sin x_3  \ema,
\quad
\ket{ a_3' } = \bma \sin x_3  \\ -\cos x_3  \ema, 
\notag\\&&
\ket{ a_4 } = \bma \cos x_4  \\ \sin x_4  \ema,
\quad
\ket{ a_4' } = \bma \sin x_4  \\ -\cos x_4  \ema,
\notag\\&&
\ket{ a_5 } = \bma \cos x_5  \\ \sin x_5  \ema,
\quad
\ket{ a_5' } = \bma \sin x_5  \\ -\cos x_5  \ema,
\notag\\&&
\ket{ b_3 } = \bma \cos y_3  \\ \sin y_3  \ema,
\quad
\ket{ b_3' } = \bma \sin y_3  \\ -\cos y_3  \ema,
\notag\\&&
\ket{ b_4 } = \bma \cos y_4  \\ \sin y_4  \ema,
\quad
\ket{ b_4' } = \bma \sin y_4  \\ -\cos y_4  \ema,
\notag\\&&
\ket{ b_5 } = \bma \cos y_5  \\ \sin y_5  \ema,
\quad
\ket{ b_5' } = \bma \sin y_5  \\ -\cos y_5  \ema,
\notag\\&&
\ket{ c_3 } = \bma \cos w_3  \\ \sin w_3  \ema,
\quad
\ket{ c_3' } = \bma \sin w_3  \\ -\cos w_3  \ema,
\notag\\&&
\ket{ c_4 } = \bma \cos w_4  \\ \sin w_4  \ema,
\quad
\ket{ c_4' } = \bma \sin w_4  \\ -\cos w_4  \ema,
\notag\\&&
\ket{ c_5 } = \bma \cos w_5  \\ \sin w_5  \ema,
\quad
\ket{ c_5' } = \bma \sin w_5  \\ -\cos w_5  \ema.
\notag
\end{eqnarray}
\begin{eqnarray}
\label{eq:06}
\bma
0 & 0 & 0 & 0 & 0\\
0 & 1 & 0 & a_4 & a_5\\
a_1 & a_3 & a_2 & 1 & a_5'\\
a_1 & a_3' & a_2 & a_4' & 1\\
1 & b_3 & a_2' & b_4 & b_5\\
1 & b_3' & a_2' & c_4 & c_5\\
a_1' & c_3 & 1 & b_4' & c_5'\\
a_1' & c_3' & 1 & c_4' & b_5'\\
\ema.
\end{eqnarray}
(V)	When we merge the system AC of the set $\cT_5$ in \eqref{eq:04}, we switch column B and C, and obtain the matrix in \eqref{eq:06}. The first two columns of \eqref{eq:06} consists of the following eight rows.
\begin{eqnarray}
 &&
 \ket{ e_1 } = \ket{ 0, 0} =  \bma 1  \\ 0  \\ 0  \\ 0 \ema,
 \\&&
 \ket{ e_2 } = \ket{ 0, 1} =  \bma 0  \\ 1  \\ 0  \\ 0 \ema,
 \notag\\&&
 \ket{ e_3 } =  \ket{a_1, a_3} =  \bma  \cos x_1  \cos x_3   \\  \cos x_1  \sin x_3   \\  \sin x_1  \cos x_3   \\  \sin x_1  \sin x_3  \ema,
 \notag\\&&
 \ket{ e_4 } =  \ket{a_1, a_3'} =  \bma  \cos x_1  \sin x_3   \\ - \cos x_1  \cos x_3   \\  \sin x_1  \sin x_3   \\ - \sin x_1   \cos x_3  \ema, 
 \notag\\&&
 \ket{ e_5 } =  \ket{1, b_3} =  \bma 0  \\ 0  \\  \cos y_3   \\  \sin y_3  \ema,
 \notag\\&&
 \ket{ e_6 } = \ket{1, b_3'} =  \bma 0  \\ 0  \\  \sin y_3   \\ - \cos y_3  \ema, 
 \notag\\&&
 \ket{ e_7 } =  \ket{a_1', c_3} = \bma \sin x_1  \cos w_3   \\  \sin x_1  \sin w_3   \\ - \cos x_1  \cos w_3   \\ - \cos x_1  \sin w_3 \ema,
 \notag\\&&
 \ket{ e_8 } =  \ket{a_1', c_3'} =  \bma   \sin x_1  \sin w_3   \\ - \sin x_1  \cos w_3   \\ - \cos x_1  \sin w_3   \\  \cos x_1   \cos w_3   \ema. 
 \notag
\end{eqnarray}

Let $[\ket{ e_1 },..., \ket{ e_8 }]$= 
\begin{eqnarray}
\label{eq:07}
\bma
 1 &  0 &  \cos{x_1}  \cos{x_3} & \cos{x_1}  \sin{x_3} &   0 & 0 & \sin{x_1}  \cos{w_3} &  \sin{x_1}  \sin{w_3}  \\
 0 &  1 &  \cos{x_1}  \sin{x_3} & -\cos{x_1}  \cos{x_3} &  0 & 0 & \sin{x_1}  \sin{w_3} & -\sin{x_1}  \cos{w_3}\\
 0 &  0 &  \sin{x_1}  \cos{x_3} & \sin{x_1}  \sin{x_3} &  \cos{y_3} & \sin{y_3} & -\cos{x_1} \cos{w_3} &  -\cos{x_1}  \sin{w_3}\\
 0 &  0 &  \sin{x_1}  \sin{x_3} & -\sin{x_1}  \cos{x_3} & \sin{y_3} & -\cos{y_3}& -\cos{x_1} \sin{w_3} & \cos{x_1}  \cos{w_3}\\
\ema.
\notag\\
\end{eqnarray}
Suppose that the five-qubit set 
$$\cT_5=\{ \ket{a_j,b_j,c_j,e_j},j=1,...,8\}$$ in \eqref{eq:06}
is orthogonal to the product vector $\ket{u, v, w, y_5}_{B: D: E: AC} \in \cH_B\otimes\cH_D\otimes\cH_E\otimes\cH_{AC}=\bbC^2 \otimes \bbC^2 \otimes \bbC^2 \otimes \bbC^4$. Hence  $\ket{u, v, w}$ is orthogonal to some of the eight three-qubit product vectors $\ket{a_1,b_1,c_1},..., \ket{a_8,b_8,c_8}$. Up to the permutation of subscripts, we may assume that  $\ket{u, v, w}$ is orthogonal to k product vectors $\ket{a_1,b_1,c_1},..., \ket{a_k,b_k,c_k}$, and  $\ket{y_5}$ is orthogonal to (8-k) product vectors  $\ket{f_{k+1}},..., \ket{f_8}$.
By checking the expression of $\cT_5$, one can show the property that each column of the matrix of $\cT_5$ has at most two identical elements. In particular, the set $\{\ket{a_j}\}$ has at most two identical elements, and the set $\{\ket{b_j}\}$, $\{\ket{c_j}\}$ has no identical elements. So we obtain $k<5$, and it suffices to study the case k=4. Hence,  $\ket{u, v, w}$ is orthogonal to exactly four of the eight product vectors  $\ket{a_1,b_1,c_1},..., \ket{a_8,b_8,c_8}$. 
At the same time,  $\ket{y_5}$ is orthogonal to four of the eight product vectors  $\ket{e_1},...,  \ket{e_8}$. So the four vectors form a $4\times4$ singular matrix, which has rank at most three. However, by applying Matlab to matrix \eqref{eq:07}, we have shown that such a matrix does not exist.  Hence, the product vector  $\ket{u, v, w, y_5}_{B: D: E: AC}$ does not exist. We have proven that $\cT_5=\{\ket{a_j,b_j,c_j,e_j}, j=1,...,8\}$ is a four-partite UPB in $\cH_B\otimes\cH_D\otimes\cH_E\otimes\cH_{AC}$.

(VI)	When we merge the system AD of the set $\cT_6$ in \eqref{eq:04}, we switch column D $\rightarrow$ B $\rightarrow$ C $\rightarrow$ D, and obtain the matrix in \eqref{eq:08}.
\begin{eqnarray}
\label{eq:08}
\bma
 0 & 0 & 0 & 0 & 0\\ 
 0 & a_4 & 0 & 1 & a_5\\ 
 a_1 & 1 & a_2 & a_3 & a_5'\\ 
 a_1 & a_4' & a_2 & a_3' & 1\\
 1 & b_4 & a_2' & b_3 & b_5\\ 
 1 & c_4 & a_2' & b_3' & c_5\\
 a_1' & b_4' & 1 & c_3 & c_5'\\
 a_1' & c_4' & 1 & c_3' & b_5'\\
\ema.
\end{eqnarray}

The  first  two  columns  of \eqref{eq:08} consists  of  the following eight rows.
\begin{eqnarray}
&&
 \ket{ f_1}   = \ket{ 0, 0 } = \bma 1 \\ 0 \\ 0 \\ 0 \ema,
 \\&&
 \ket{ f_2 }  = \ket{ 0, a_4 } = \bma \cos x_4  \\ \sin x_4  \\ 0 \\ 0 \ema,
 \notag\\&&
 \ket{ f_3 }  =  \ket{a_1, 1 } = \bma 0\\ \cos x_1  \\ 0 \\ \sin x_1  \ema,
 \notag\\&&
 \ket{ f_4 }  =  \ket{a_1, a_4' } = \bma \cos x_1  \sin x_4  \\ -\cos x_1  \cos x_4  \\ \sin x_1  \sin x_4  \\ -\sin x_1   \cos x_4 \ema,
 \notag\\&&
 \ket{ f_5 }  =  \ket{1, b_4}  = \bma 0 \\ 0 \\ \cos y_4  \\ \sin y_4 \ema, 
 \notag\\&&
 \ket{ f_6 }  = \ket{1, c_4}  = \bma 0 \\ 0 \\ \cos w_4  \\ \sin w_4 \ema,
 \notag\\&&
 \ket{ f_7 }  =  \ket{a_1', b_4'}  = \bma  \sin x_1  \sin y_4  \\ -\sin x_1  \cos y_4  \\ -\cos x_1  \sin y_4  \\ \cos x_1   \cos y_4 \ema,
 \notag\\&&
 \ket{ f_8}   =  \ket{a_1', c_4'}  =  \bma \sin x_1  \sin w_4  \\ -\sin x_1  \cos w_4  \\ -\cos x_1  \sin w_4  \\ \cos x_1   \cos w_4 \ema.
 \notag
\end{eqnarray}

 Let   $[\ket{ f_1 }, ...,  \ket{ f_8 }]$  =
 \begin{eqnarray}
 \label{eq:09}
 \bma
 1 & \cos x_4 & 0 & \cos x_1  \sin x_4 & 0 & 0 & \sin x_1  \sin y_4 & \sin x_1  \sin w_4  \\
 0 & \sin x_4 & \cos x_1 & -\cos x_1  \cos x_4 & 0 & 0 & -\sin x_1  \cos y_4 &  -\sin x_1  \cos w_4 \\
 0 & 0 & 0 & \sin x_1  \sin x_4 & \cos y_4 & \cos w_4 & -\cos x_1  \sin y_4 &  -\cos x_1  \sin w_4 \\
 0 & 0 & \sin x_1 & -\sin x_1  \cos x_4 & \sin y_4 & \sin w_4 & \cos x_1  \cos y_4 &  \cos x_1  \cos w_4 \\
 \ema
\end{eqnarray}
Suppose that the five-qubit set 
$$\cT_6=\{ \ket{a_j,b_j,c_j,f_j},j=1,...,8\}$$ in \eqref{eq:08}
is orthogonal to the product vector $\ket{u, v, w, y_6}_{B: C: E: AD} \in \bbC^2 \otimes \bbC^2 \otimes \bbC^2 \otimes \bbC^4$. Hence  $\ket{u, v, w}$ is orthogonal to some of the eight three-qubit product vectors $\ket{a_1,b_1,c_1},..., \ket{a_8,b_8,c_8}$. Up to the permutation of subscripts, we may assume that  $\ket{u, v, w}$ is orthogonal to k product vectors $\ket{a_1,b_1,c_1},..., \ket{a_k,b_k,c_k}$, and  $\ket{y_6}$ is orthogonal to (8-k) product vectors  $\ket{f_{k+1}},..., \ket{f_8}$.
By checking the expression of $\cT_6$, one can show the property that each column of the matrix of $\cT_6$ has at most two identical elements. In particular, the set $\{\ket{a_j}\}$ has at most two identical elements, and the set $\{\ket{b_j}\}$, $\{\ket{c_j}\}$ has no  identical elements. So we obtain $k<5$, and it suffices to study the case $k=4$. Hence,  $\ket{u, v, w}$ is orthogonal to exactly four of the eight product vectors  $\ket{a_1,b_1,c_1},..., \ket{a_8,b_8,c_8}$. 
At the same time,  $\ket{y_6}$ is orthogonal to four of the eight product vectors  $\ket{f_1},...,  \ket{f_8}$. So the four vectors form a $4\times4$ singular matrix, which has rank at most three. However, by applying Matlab to matrix \eqref{eq:09}, we have shown that such a matrix does not exist.  Hence, the product vector  $\ket{u, v, w, y_6}_{B: C: E: AD}$ does not exist. We have proven that $\cT_6=\{\ket{a_j,b_j,c_j,f_j}, j=1,...,8\}$ is a four-partite UPB in $\cH_{B: C: E: AD} = \bbC^2 \otimes \bbC^2 \otimes \bbC^2 \otimes \bbC^4$.

(VII) When we merge the system AD of the set $\cT_7$ in \eqref{eq:04}, we switch column E $\rightarrow$ B $\rightarrow$ C $\rightarrow$ D $\rightarrow$ E, and obtain the matrix in \eqref{eq:10}. 
\begin{eqnarray}
\label{eq:10}
\bma
 0 & 0 & 0 & 0 & 0\\ 
 0 & a_5 & 0 & 1 & a_4\\ 
 a_1 & a_5' & a_2 & a_3 & 1\\ 
 a_1 & 1 & a_2 & a_3' & a_4'\\
 1 & b_5 & a_2' & b_3 & b_4\\ 
 1 & c_5 & a_2' & b_3' & c_4\\
 a_1' & c_5' & 1 & c_3 & b_4'\\
 a_1' & b_5' & 1 & c_3' & c_4'\\
\ema.
\end{eqnarray}
The  first  two  columns  of \eqref{eq:10} consists  of  the following eight rows.
\begin{eqnarray}
&&
 \ket{  g_1 } = \ket{ 0, 0} =  \bma 1 \\ 0 \\ 0 \\ 0 \ema, 
 \\&&
 \ket{  g_2 } = \ket{ 0, a_5} = \bma \cos x_5  \\ \sin x_5 \\ 0 \\ 0 \ema, 
 \notag\\&&
 \ket{  g_3 } = \ket{a_1, a_5'} = \bma \cos x_1  \sin x_5  \\ -\cos x_1  \cos x_5  \\ \sin x_1  \sin x_5  \\ -\sin x_1   \cos x_5   \ema, 
  \notag\\&&
 \ket{  g_4 } = \ket{a_1, 1} = \bma 0 \\ \cos x_1  \\ 0 \\ \sin x_1  \ema,
  \notag\\&&
 \ket{  g_5 } =  \ket{1, b_5} = \bma 0 \\ 0 \\ \cos y_5  \\ \sin y_5  \ema, 
  \notag\\&&
 \ket{  g_6 } = \ket{1, c_5} = \bma 0 \\ 0 \\ \cos w_5  \\ \sin w_5  \ema, 
  \notag\\&&
 \ket{  g_7 } = \ket{a_1', c_5'} = \bma  \sin x_1  \sin w_5  \\ -\sin x_1  \cos w_5  \\ -\cos x_1  \sin w_5  \\ \cos x_1   \cos w_5    \ema, 
  \notag\\&&
 \ket{  g_8 } = \ket{a_1', b_5'} = \bma  \sin x_1  \sin y_5  \\ -\sin x_1  \cos y_5  \\ -\cos x_1  \sin y_5  \\ \cos x_1   \cos y_5    \ema.
  \notag
\end{eqnarray}

Let   $\ket{  g_1 }$, ...,  $\ket{  g_8 }$  =
\begin{eqnarray}
\label{eq:11}
\bma
 1 & \cos x_5 &  \cos x_1  \sin x_5 & 0 & 0 & 0 & \sin x_1  \sin w_5 & \sin x_1  \sin y_5\\
 0 & \sin x_5 & -\cos x_1  \cos x_5 & \cos x_1 & 0 & 0 & -\sin x_1  \cos w_5 & -\sin x_1  \cos y_5\\
0 & 0 & \sin x_1  \sin x_5 & 0 & \cos y_5 & \cos w_5 & -\cos x_1  \sin w_5 & -\cos x_1  \sin y_5\\
 0 & 0 & -\sin x_1  \cos x_5 & \sin x_1 & \sin y_5 & \sin w_5 & \cos x_1  \cos w_5 &  \cos  x_1 \cos y_5\\
 \ema.
\end{eqnarray}

Suppose that the five-qubit set 
$$\cT_7=\{ \ket{a_j,b_j,c_j,g_j},j=1,...,8\}$$ in \eqref{eq:10}
is orthogonal to the product vector $\ket{u, v, w, y_7}_{B: C: D: AE} \in \bbC^2 \otimes \bbC^2 \otimes \bbC^2 \otimes \bbC^4$. Hence  $\ket{u, v, w}$ is orthogonal to some of the eight three-qubit product vectors $\ket{a_1,b_1,c_1},..., \ket{a_8,b_8,c_8}$. Up to the permutation of subscripts, we may assume that  $\ket{u, v, w}$ is orthogonal to k product vectors $\ket{a_1,b_1,c_1},..., \ket{a_k,b_k,c_k}$, and  $\ket{y_7}$ is orthogonal to (8-k) product vectors  $\ket{g_{k+1}},..., \ket{g_8}$.
By checking the expression of $\cT_7$, one can show the property that each column of the matrix of $\cT_7$ has at most two identical elements. In particular, the set $\{\ket{a_j}\}$ has at most two identical elements, and the set $\{\ket{b_j}\}$, $\{\ket{c_j}\}$ has no  identical elements. So we obtain $k<5$, and it suffices to study the case $k=4$. Hence,  $\ket{u, v, w}$ is orthogonal to exactly four of the eight product vectors  $\ket{a_1,b_1,c_1},..., \ket{a_8,b_8,c_8}$. 
At the same time,  $\ket{y_7}$ is orthogonal to four of the eight product vectors  $\ket{g_1},...,  \ket{g_8}$. So the four vectors form a $4\times4$ singular matrix, which has rank at most three. However, by applying Matlab to matrix \eqref{eq:11}, we have shown that such a matrix does not exist.  Hence, the product vector  $\ket{u, v, w, y_7}_{B: D: E: AC}$ does not exist. We have proven that $\cT_7=\{\ket{a_j,b_j,c_j,g_j}, j=1,...,8\}$ is a four-partite UPB in $\cH_{B: C: D: AE} = \bbC^2 \otimes \bbC^2 \otimes \bbC^2 \otimes \bbC^4$.

(VIII)	When we merge the system BC of the set $\cT_8$ in \eqref{eq:04}, we switch column $C \rightarrow B \rightarrow A \rightarrow C$, and obtain the matrix in \eqref{eq:12}.
\begin{eqnarray}
\label{eq:12}
\bma
0 & 0 & 0 & 0 & 0\\
0 & 1 & 0 & a_4 & a_5\\
a_2 & a_3 & a_1 & 1 & a_5'\\
a_2 & a_3' & a_1 & a_4' & 1\\
a_2' & b_3 & 1 & b_4 & b_5\\
a_2' & b_3' & 1 & c_4 & c_5\\
1 & c_3 & a_1' & b_4' & c_5'\\
1 & c_3' & a_1' & c_4' & b_5'\\
\ema.
\end{eqnarray}
The first two columns of \eqref{eq:12} consists of the following eight rows.
\begin{eqnarray}
 &&
 \ket{ h_1 } = \ket{ 0, 0} =  \bma 1  \\ 0  \\ 0  \\ 0 \ema,
 \\&&
 \ket{ h_2 } = \ket{ 0, 1} =  \bma 0  \\ 1  \\ 0  \\ 0 \ema,
 \notag\\&&
 \ket{ h_3 } =  \ket{a_2, a_3} =  \bma  \cos x_2  \cos x_3   \\  \cos x_2  \sin x_3   \\  \sin x_2  \cos x_3   \\  \sin x_2  \sin x_3  \ema,
 \notag\\&&
 \ket{ h_4 } =  \ket{a_2, a_3'} =  \bma  \cos x_2  \sin x_3   \\ - \cos x_2  \cos x_3   \\  \sin x_2  \sin x_3   \\ - \sin x_2   \cos x_3  \ema, 
 \notag\\&&
 \ket{ h_5 } =  \ket{a_2', b_3} =  \bma  \sin x_2 \cos y_3 \\  \sin x_2 \sin y_3 \\  -\cos x_2 \cos y_3   \\  -\cos x_2 \sin y_3  \ema,
 \notag\\&&
 \ket{ h_6 } = \ket{a_2', b_3'} =  \bma \sin x_2 \sin y_3  \\ -\sin x_2 \cos y_3  \\  -\cos x_2 \sin y_3   \\ \cos x_2 \cos y_3  \ema, 
 \notag\\&&
 \ket{ h_7 } =  \ket{1, c_3} = \bma 0 \\  0  \\ \cos w_3  \\  \sin w_3 \ema,
 \notag\\&&
 \ket{ h_8 } =  \ket{1, c_3'} =  \bma  0 \\  0  \\ \sin w_3   \\  -\cos w_3  \ema. 
 \notag
\end{eqnarray}

Let $[\ket{ h_1 },..., \ket{ h_8 }]$= 
\begin{eqnarray}
\label{eq:13}
\bma
1 & 0 & \cos{x_2} \cos{x_3} & \cos{x_2} \sin{x_3} & \sin{x_2} \cos{y_3} & \sin{x_2} \sin{y_3} &  0 & 0 \\
0 & 1 & \cos{x_2} \sin{x_3} & -\cos{x_2} \cos{x_3} & \sin{x_2} \sin{y_3} & -\sin{x_2} \cos{y_3} & 0 & 0  \\
0 & 0 & \sin{x_2} \cos{x_3} & \sin{x_2} \sin{x_3} & -\cos{x_2} \cos{y_3} & -\cos{x_2} \sin{y_3} & \cos{w_3} & \sin{w_3}\\
0 & 0 & \sin{x_2} \sin{x_3} & -\sin{x_2} \cos{x_3} & -\cos{x_2} \sin{y_3} & \cos {x_2} \cos{y_3} & \sin{w_3} & -\cos{w_3}\\
\ema.
\notag\\
\end{eqnarray}
Suppose that the five-qubit set 
$$\cT_8=\{ \ket{a_j,b_j,c_j,h_j},j=1,...,8\}$$
in \eqref{eq:12} is orthogonal to the product vector $\ket{u, v, w, y_8}_{A: D: E: BC} \in \bbC^2 \otimes \bbC^2 \otimes \bbC^2 \otimes \bbC^4$. Hence  $\ket{u, v, w}$ is orthogonal to some of the eight three-qubit product vectors $\ket{a_1,b_1,c_1},..., \ket{a_8,b_8,c_8}$. Up to the permutation of subscripts, we may assume that  $\ket{u, v, w}$ is orthogonal to k product vectors $\ket{a_1,b_1,c_1},..., \ket{a_k,b_k,c_k}$, and  $\ket{y_8}$ is orthogonal to (8-k) product vectors  $\ket{h_{k+1}},..., \ket{h_8}$.
By checking the expression of $\cT_8$, one can show the property that each column of the matrix of $\cT_8$ has at most two identical elements. In particular, the set $\{\ket{a_j}\}$ has at most two identical elements, and the set $\{\ket{b_j}\}$, $\{\ket{c_j}\}$ has no  identical elements. So we obtain $k<5$, and it suffices to study the case $k=4$. Hence,  $\ket{u, v, w}$ is orthogonal to exactly four of the eight product vectors  $\ket{a_1,b_1,c_1},..., \ket{a_8,b_8,c_8}$. 
At the same time,  $\ket{y_8}$ is orthogonal to four of the eight product vectors  $\ket{h_1},...,  \ket{h_8}$. So the four vectors form a $4\times4$ singular matrix, which has rank at most three. However, by applying Matlab to matrix \eqref{eq:13}, we have shown that such a matrix does not exist.  Hence, the product vector  $\ket{u, v, w, y_8}_{A: D: E: BC}$ does not exist. We have proven that $\cT_8=\{\ket{a_j,b_j,c_j,h_j}, j=1,...,8\}$ is a four-partite UPB in $\cH_{A: D: E: BC} = \bbC^2 \otimes \bbC^2 \otimes \bbC^2 \otimes \bbC^4$.

(IX)	When we merge the system BD of the set $\cT_9$ in \eqref{eq:04}, we switch column $D \rightarrow B \rightarrow A \rightarrow C \rightarrow D$, and obtain the matrix in \eqref{eq:14}.
\begin{eqnarray}
\label{eq:14}
\bma
0 & 0 & 0 & 0 & 0\\
0 & a_4 & 0 & 1 & a_5\\
a_2 & 1 & a_1 & a_3 & a_5'\\
a_2 & a_4' & a_1 & a_3' & 1\\
a_2' & b_4 & 1 & b_3 & b_5\\
a_2' & c_4 & 1 & b_3' & c_5\\
1 & b_4' & a_1' & c_3 & c_5'\\
1 & c_4' & a_1' & c_3' & b_5'\\
\ema.
\end{eqnarray}
The first two columns of \eqref{eq:14} consists of the following eight rows.
\begin{eqnarray}
 &&
 \ket{ j_1 } = \ket{ 0, 0} =  \bma 1  \\ 0  \\ 0  \\ 0 \ema,
 \\&&
 \ket{ j_2 } = \ket{ 0, a_4} =  \bma \cos x_4  \\ \sin x_4  \\ 0  \\ 0 \ema,
 \notag\\&&
 \ket{ j_3 } =  \ket{a_2, 1} =  \bma  0  \\  \cos x_2    \\ 0 \\  \sin x_2  \ema,
 \notag\\&&
 \ket{ j_4 } =  \ket{a_2, a_4'} =  \bma  \cos x_2  \sin x_4   \\ - \cos x_2  \cos x_4   \\  \sin x_2  \sin x_4   \\ - \sin x_2   \cos x_4  \ema, 
 \notag\\&&
 \ket{ j_5 } =  \ket{a_2', b_4} =  \bma  \sin x_2 \cos y_4 \\  \sin x_2 \sin y_4 \\  -\cos x_2 \cos y_4   \\  -\cos x_2 \sin y_4  \ema,
 \notag\\&&
 \ket{ j_6 } = \ket{a_2', c_4} =  \bma \sin x_2 \cos w_4  \\ \sin x_2 \sin w_4  \\  -\cos x_2 \cos w_4   \\ -\cos x_2 \sin w_4  \ema, 
 \notag\\&&
 \ket{ j_7 } =  \ket{1, b_4'} = \bma 0 \\  0  \\ \sin y_4  \\  -\cos y_4 \ema,
 \notag\\&&
 \ket{ j_8 } =  \ket{1, c_4'} =  \bma  0 \\  0  \\ \sin w_4   \\  -\cos w_4  \ema. 
 \notag
\end{eqnarray}

Let $[\ket{ j_1 },..., \ket{ j_8 }]$= 
\begin{eqnarray}
\label{eq:15}
\bma
1 & \cos x_4 & 0 & \cos{x_2} \sin{x_4} & \sin{x_2} \cos{y_4} & \sin{x_2} \cos{w_4} &  0 & 0 \\
0 & \sin x_4 & \cos{x_2} & -\cos{x_2} \cos{x_4} & \sin{x_2} \sin{y_4} & \sin{x_2} \sin{w_4} & 0 & 0  \\
0 & 0 & 0 & \sin{x_2} \sin{x_4} & -\cos{x_2} \cos{y_4} & -\cos{x_2} \cos{w_4} & \sin{y_4} & \sin{w_4}\\
0 & 0 & \sin{x_2} & -\sin{x_2} \cos{x_4} & -\cos{x_2} \sin{y_4} & -\cos {x_2} \sin{w_4} & -\cos{y_4} & -\cos{w_4}\\
\ema.
\notag\\
\end{eqnarray}
Suppose that the five-qubit set 
$$\cT_9=\{ \ket{a_j,b_j,c_j,m_j},j=1,...,8\}$$
in \eqref{eq:14} is orthogonal to the product vector $\ket{u, v, w, y_9}_{A: C: E: BD} \in \bbC^2 \otimes \bbC^2 \otimes \bbC^2 \otimes \bbC^4$. Hence  $\ket{u, v, w}$ is orthogonal to some of the eight three-qubit product vectors $\ket{a_1,b_1,c_1},..., \ket{a_8,b_8,c_8}$. Up to the permutation of subscripts, we may assume that  $\ket{u, v, w}$ is orthogonal to k product vectors $\ket{a_1,b_1,c_1},..., \ket{a_k,b_k,c_k}$, and  $\ket{y_9}$ is orthogonal to (8-k) product vectors  $\ket{m_{k+1}},..., \ket{m_8}$.
By checking the expression of $\cT_9$, one can show the property that each column of the matrix of $\cT_9$ has at most two identical elements. In particular, the set $\{\ket{a_j}\}$ has at most two identical elements, and the set $\{\ket{b_j}\}$, $\{\ket{c_j}\}$ has no  identical elements. So we obtain $k<5$, and it suffices to study the case $k=4$. Hence,  $\ket{u, v, w}$ is orthogonal to exactly four of the eight product vectors  $\ket{a_1,b_1,c_1},..., \ket{a_8,b_8,c_8}$. 
At the same time,  $\ket{y_9}$ is orthogonal to four of the eight product vectors  $\ket{m_1},...,  \ket{m_8}$. So the four vectors form a $4\times4$ singular matrix, which has rank at most three. However, by applying Matlab to matrix \eqref{eq:15}, we have shown that such a matrix does not exist.  Hence, the product vector  $\ket{u, v, w, y_9}_{A: C: E: BD}$ does not exist. We have proven that $\cT_9=\{\ket{a_j,b_j,c_j,m_j}, j=1,...,8\}$ is a four-partite UPB in $\cH_{A: C: E: BD} = \bbC^2 \otimes \bbC^2 \otimes \bbC^2 \otimes \bbC^4$.

(X)	When we merge the system BE of the set $\cT_{10}$ in \eqref{eq:04}, we switch column $E \rightarrow B \rightarrow A \rightarrow C \rightarrow D \rightarrow E$ of the set $\cT$, and obtain the matrix in \eqref{eq:16}.
\begin{eqnarray}
\label{eq:16}
\bma
0 & 0 & 0 & 0 & 0\\
0 & a_5 & 0 & 1 & a_4\\
a_2 & a_5' & a_1 & a_3 & 1\\
a_2 & 1 & a_1 & a_3' & a_4'\\
a_2' & b_5 & 1 & b_3 & b_4\\
a_2' & c_5 & 1 & b_3' & c_4\\
1 & c_5' & a_1' & c_3 & b_4'\\
1 & b_5' & a_1' & c_3' & c_4'\\
\ema.
\end{eqnarray}
The first two columns of \eqref{eq:16} consists of the following eight rows.
\begin{eqnarray}
 &&
 \ket{ l_1 } = \ket{ 0, 0} =  \bma 1  \\ 0  \\ 0  \\ 0 \ema,
 \\&&
 \ket{ l_2 } = \ket{ 0, a_5} =  \bma \cos x_5  \\ \sin x_5  \\ 0  \\ 0 \ema,
 \notag\\&&
 \ket{ l_3 } =  \ket{a_2, a_5'} =  \bma  \cos x_2 \sin x_5  \\  -\cos x_2 \cos x_5   \\ \sin x_2 \sin x_5 \\  -\sin x_2 \cos x_5  \ema,
 \notag\\&&
 \ket{ l_4 } =  \ket{a_2, 1} =  \bma  0   \\ \cos x_2   \\  0  \\ \sin x_2 \ema,
 \notag\\&&
 \ket{ l_5 } =  \ket{a_2', b_5} =  \bma  \sin x_2 \cos y_5 \\  \sin x_2 \sin y_5 \\  -\cos x_2 \cos y_5   \\  -\cos x_2 \sin y_5  \ema,
 \notag\\&&
 \ket{ l_6 } = \ket{a_2', c_5} =  \bma \sin x_2 \cos w_5  \\ \sin x_2 \sin w_5  \\  -\cos x_2 \cos w_5   \\ -\cos x_2 \sin w_5  \ema, 
 \notag\\&&
 \ket{ l_7 } =  \ket{1, c_5'} = \bma 0 \\  0  \\ \sin w_5  \\  -\cos w_5 \ema,
 \notag\\&& 
 \ket{ l_8 } =  \ket{1, b_5'} =  \bma  0 \\  0  \\ \sin y_5   \\  -\cos y_5  \ema. 
 \notag
\end{eqnarray}
Let $[\ket{ l_1 },..., \ket{ l_8 }]$= 
\begin{eqnarray}
\label{eq:17}
\bma
1 & \cos x_5 & \cos{x_2} \sin{x_5} & 0 & \sin{x_2} \cos{y_5} & \sin{x_2} \cos{w_5} &  0 & 0 \\
0 & \sin x_5 & -\cos{x_2} \cos{x_5} & \cos x_2 & \sin{x_2} \sin{y_5} & \sin{x_2} \sin{w_5} & 0 & 0  \\
0 & 0 &  \sin{x_2} \sin{x_5} & 0 & -\cos{x_2} \cos{y_5} & -\cos{x_2} \cos{w_5} & \sin{w_5} & \sin{y_5}\\
0 & 0 &  -\sin{x_2} \cos{x_5} & \sin x_2 & -\cos{x_2} \sin{y_5} & -\cos {x_2} \sin{w_5} & -\cos{w_5} & -\cos{y_5}\\
\ema.
\notag\\
\end{eqnarray}
Suppose that the five-qubit set 
$$\cT_{10}=\{ \ket{a_j,b_j,c_j,n_j},j=1,...,8\}$$ in \eqref{eq:16}
is orthogonal to the product vector $\ket{u, v, w, y_{10}}_{A: C: D: BE} \in \bbC^2 \otimes \bbC^2 \otimes \bbC^2 \otimes \bbC^4$. Hence  $\ket{u, v, w}$ is orthogonal to some of the eight three-qubit product vectors $\ket{a_1,b_1,c_1},..., \ket{a_8,b_8,c_8}$. Up to the permutation of subscripts, we may assume that  $\ket{u, v, w}$ is orthogonal to k product vectors $\ket{a_1,b_1,c_1},..., \ket{a_k,b_k,c_k}$, and  $\ket{y_{10}}$ is orthogonal to (8-k) product vectors  $\ket{n_{k+1}},..., \ket{n_8}$.
By checking the expression of $\cT_{10}$, one can show the property that each column of the matrix of $\cT_{10}$ has at most two identical elements. In particular, the set $\{\ket{a_j}\}$ has at most two identical elements, and the set $\{\ket{b_j}\}$, $\{\ket{c_j}$\} has no  identical elements. So we obtain $k<5$, and it suffices to study the case $k=4$. Hence,  $\ket{u, v, w}$ is orthogonal to exactly four of the eight product vectors  $\ket{a_1,b_1,c_1},..., \ket{a_8,b_8,c_8}$. 
At the same time,  $\ket{y_{10}}$ is orthogonal to four of the eight product vectors  $\ket{n_1},...,  \ket{n_8}$. So the four vectors form a $4\times4$ singular matrix, which has rank at most three. However, by applying Matlab to matrix \eqref{eq:17}, we have shown that such a matrix does not exist.  Hence, the product vector  $\ket{u, v, w, y_{10}}_{A: C: D: BE}$ does not exist. We have proven that $\cT_{10}=\{\ket{a_j,b_j,c_j,n_j}, j=1,...,8\}$ is a four-partite UPB in $\cH_{A: C: D: BE} = \bbC^2 \otimes \bbC^2 \otimes \bbC^2 \otimes \bbC^4$.
\end{proof}

So we have managed to construct a family of real four-partite UPBs of size eight, by using the existing five-qubit UPBs. It remains to show whether the real UPB in Theorem \ref{thm:2x2x2x2x2->2x2x2x4} can be extended to complex field. Our primary investigation shows that it might be true by using the similar technique in the above proof.

\section{Application}
\label{sec:app}

In this section, we apply the results in the last two sections. We shall construct two families of multipartite entangled states, which have positive partial transpose (PPT). A bipartite state $\r\in\cB(\bbC^m\otimes\bbC^n):=\cB(\cH_A\otimes\cH_B)$ is a positive semidefinite matrix of trace one. The partial transpose of $\r$ is defined as $\r^\G=\sum_{j,k}(\ketbra{j}{k}\otimes I)\r(\ketbra{j}{k}\otimes I)$ with the computational basis $\{\ket{j}\}\in\bbC^m$. We say that $\r$ is a PPT state when $\r^\G$ is also positive semidefinite. For example, the separable state is PPT, because it is the convex sum of product states. On the other hand, the PPT state may be not separable when $\min\{m,n\}>2$ \cite{peres1996,hhh96}. 
The examples of
two-qutrit PPT entangled states were constructed in 1980s, and then introduced in quantum information
\cite{horodecki1997,MD1975Completely,E1982Decomposable}. 
Next, two-qutrit PPT entangled states of rank four were fully characterized in
\cite{cd11JMP} and \cite{skowronek11}, respectively. In contrast, the construction of multipartite PPT entangled states and their investigation in terms of entanglement theory is more involved. A systematic method for such a construction is to employ an $n$-partite UPB $\{\ket{u_j}\}_{j=1,2,...,m}\in\bbC^{d_1}\otimes...\otimes\bbC^{d_n}$. That is, one can show that
\begin{eqnarray}
\label{eq:rho}
\rho=\frac{1}{n}\bigg(
I-\sum^m_{j=1}\proj{u_j}
\bigg),
\end{eqnarray}
is an $n$-partite PPT entangled state of rank $d_1...d_n-m$. In Theorems \ref{thm:2x2x2x2->2x2x4} and \ref{thm:2x2x2x2x2->2x2x2x4}, we have constructed 
respectively two and six UPBs. Every one of them can be applied to \eqref{eq:rho} and generate a PPT entangled state $\rho$. For example, we demonstrate a $4\times2\times2$ PPT entangled state of rank eight from Theorem \ref{thm:2x2x2x2->2x2x4} as follows.
\begin{eqnarray}
\label{eq:rho1}
\rho=&&
\frac{1}{8}\bigg(
I-\proj{0,0,0,0}
-\proj{1,a_2,a_3',a_4}
-\proj{0,a_2,a_3',1}
\notag\\-&&
\proj{1,a_2,a_3',a_4'}
-\proj{a_1,a_2',1,a_4'}
-\proj{a_1',1,a_3,b_4}
\notag\\-&&
\proj{a_1,1,a_3,a_4}
-\proj{a_1',a_2',1,b_4'}
\bigg)
\notag\\:=&& 
\frac{1}{8}\bigg(
I-P\bigg),
\end{eqnarray}
where system $A,B$ are merged.
Further, we assume that $\a$ is the four-qubit PPT entangled state having the same expression as that of $\r$. 

We investigate the geometric measure of entanglement of states $\r$ and $\a$ \cite{wg2003,cxz2010}. For any $n$-partite quantum state $\s$, the measure is defined as
\begin{eqnarray}
\label{eq:gme}
G(\s):=
-\log_2
\max_{\d_1,...,\d_n}
\bra{\d_1,...,\d_n}	
\s
\ket{\d_1,...,\d_n},
\end{eqnarray}
where $\ket{\d_1,...,\d_n}$ is a normalized product state in $\bbC^{d_1}\otimes...\otimes\bbC^{d_n}$. In particular, $\r\in\cB(\bbC^4\otimes\bbC^2\otimes\bbC^2)$ means that $d_1=4$ and $d_2=d_3=2$ with $n=3$, and $\a\in\cB(\bbC^2\otimes\bbC^2\otimes\bbC^2\otimes\bbC^2)$ means that $d_1=d_2=d_3=d_4=2$ with $n=4$. Hence we obtain that
\begin{eqnarray}
G(\r)\le G(\a).
\end{eqnarray}
So $G(\r)$ is upper bounded by $G(\a)$. However it is not easy to compute, and also might not be a tight upper bound. 
Eqs. \eqref{eq:rho1} and \eqref{eq:gme} imply that, the evaluation of $G(\r)$ is equivalent to
\begin{eqnarray}
\label{min:a1nPa1n}
\min_{\ket{\d_1,\d_2,\d_3}\in\bbC^4\otimes\bbC^2\otimes\bbC^2}
\bra{\d_1,\d_2,\d_3}	
P
\ket{\d_1,\d_2,\d_3},
\end{eqnarray}
for $P\in\cB(\bbC^4\otimes\bbC^2\otimes\bbC^2)$ in \eqref{eq:rho1}. To find an upper bound, we only consider $\ket{\d_1,\d_2,\d_3}\in\bbR^4\otimes\bbR^2\otimes\bbR^2$ in \eqref{min:a1nPa1n}. We assume that
\begin{eqnarray}
\label{eq:a1}
\ket{\d_1}
=
\bma
\cos\n_1\cos\m_1
\\
\cos\n_1\sin\m_1
\\
\sin\n_1\cos\m_2
\\
\sin\n_1\sin\m_2
\ema,	
\quad
\ket{\d_2}
=
\bma
\cos\n_2
\\
\sin\n_2
\ema,	
\quad 
\ket{\d_3}
=
\bma
\cos\n_3
\\
\sin\n_3
\ema,
\end{eqnarray}
where $\n_j,\m_j\in[0,2\p)$. So \eqref{min:a1nPa1n} is upper bounded by 
\begin{eqnarray}
\label{min:a1nPa1n-1}
\min_{\ket{\d_1,\d_2,\d_3}\in\bbR^4\otimes\bbR^2\otimes\bbR^2}
\bra{\d_1,\d_2,\d_3}	
P
\ket{\d_1,\d_2,\d_3}.
\end{eqnarray}
Using Eqs. \eqref{eq:ket0} -\eqref{eq:c8}, \eqref{eq:rho1} and \eqref{eq:a1}, we can write up the target function in \eqref{min:a1nPa1n-1} as
\begin{eqnarray}
\label{eq:d1d2d3}
&&
\bra{\d_1,\d_2,\d_3}\bigg(
\proj{0,0,0,0}
+\proj{1,a_2,a_3',a_4}
+\proj{0,a_2,a_3',1}
\notag\\+&&
\proj{1,a_2,a_3',a_4'}
+\proj{a_1,a_2',1,a_4'}
+\proj{a_1',1,a_3,b_4}
\notag\\+&&
\proj{a_1,1,a_3,a_4}
+\proj{a_1',a_2',1,b_4'}
\bigg)
\ket{\d_1,\d_2,\d_3}
\notag\\=&&
(\cos\n_1\cos\m_1
\cos\n_2
\cos\n_3)^2
\notag\\+&&
\bigg(
(\cos x_2 \sin \n_1 \cos \m_2 + \sin x_2 \sin \n_1 \sin \m_2)
\\&&
(\sin x_3 \cos \n_2 - \cos x_3 \sin \n_2)
(\cos x_4 \cos \n_3 + \sin x_4 \sin \n_3)
\bigg)^2
\notag\\+&&
\bigg(
(\cos x_2 \cos \n_1 \cos \m_1 + \sin x_2 \cos \n_1 \sin \m_1)
\notag\\&&
(\sin x_3 \cos \n_2 - \cos x_3 \sin \n_2)\sin \n_3
\bigg)^2
\notag\\+&&
\bigg(
(\cos x_2 \sin \n_1 \cos \m_2 + \sin x_2 \sin \n_1 \sin \m_2)
\notag\\&&
(\sin x_3 \cos \n_2 - \cos x_3 \sin \n_2)
(\sin x_4 \cos \n_3 - \cos x_4 \sin \n_3)
\bigg)^2
\notag\\+&&
\bigg(
(\cos x_1 \sin x_2 \cos \n_1 \cos \m_1 - \cos x_1 \cos x_2 \cos \n_1 \sin \m_1 +
\notag\\&&
\sin x_1 \sin x_2 \sin \n_1 \cos \m_2 - \sin x_1 \cos x_2 \sin \n_1 \sin \m_2)
(\sin x_4 \cos \n_3 - \cos x_4 \sin \n_3)\sin \n_2
\bigg)^2
\notag\\+&&
\bigg(
(\sin x_1 \cos \n_1 \sin \m_1 - \cos x_1 \sin \n_1 \sin \m_2)
\notag\\+&&
(\cos x_3 \cos \n_2 + \sin x_3 \sin \n_2)
(\cos y_4 \cos \n_3 + \sin y_4 \sin \n_3)
\bigg)^2
\notag\\+&&
\bigg(
(\cos x_1 \cos \n_1 \sin \m_1 + \sin x_1 \sin \n_1 \sin \m_2)
\notag\\+&&
(\cos x_3 \cos \n_2 + \sin x_3 \sin \n_2)
(\cos x_4 \cos \n_3 + \sin x_4 \sin \n_3)
\bigg)^2
\notag\\+&&
\bigg(
(\sin x_1 \sin x_2 \cos \n_1 \cos \m_1 - \sin x_1 \cos x_2 \cos \n_1 \sin \m_1 -
\notag\\&&
\cos x_1 \sin x_2 \sin \n_1 \cos \m_2 + \cos x_1 \cos x_2 \sin \n_1 \sin \m_2)
(\sin y_4 \cos \n_3 - \cos y_4 \sin \n_3)\sin \n_2
\bigg)^2.
\notag
\end{eqnarray}
Due to the difficulty of simplifying the function $f(\n_1,\n_2,\n_3,\m_1,\m_2)$ in Eq. \eqref{eq:d1d2d3}, we aim to find an upper bound of the function, over the parameters $\n_j,\m_j\in[0,2\p)$ and real constant $x_j,y_j$. In particular, one can obtain that
\begin{eqnarray}
&&
f(0,0,\pi/2,0,\m_2)=
(\cos x_2)^2(\sin x_3)^2,
\label{eq:min1}\\&&
f(0,0,0,\pi/2,\m_2)=
(\cos x_1)^2(\cos x_3)^2(\cos x_4)^2+(\cos x_3)^2(\cos y_4)^2(\sin x_1)^2,
\label{eq:min2}\\&&
f(\p/2,0,\p/2,\p/2,0)=
(\cos x_2)^2(\sin x_3)^2(\cos x_4)^2+(\cos x_2)^2(\sin x_3)^2(\sin x_4)^2,
\notag\\
\label{eq:min3}\\&&
f(\n_1,0,\n_3,0,0)=
(\cos \n_1)^2(\cos \n_3)^2+
(\sin \n_1)^2(\cos x_2)^2(\sin x_3)^2+
\notag\\&&
(\cos \n_1)^2
(\cos x_2)^2
(\sin \n_3)^2
(\sin x_3)^2.
\end{eqnarray}
One can show that 
$f(\n_1,0,\n_3,0,0)\ge f(\n_1,0,\p/2,0,0)=(\cos x_2)^2(\sin x_3)^2$. Hence Eq. \eqref{min:a1nPa1n} is upper bounded by the minimum of \eqref{eq:min1}-\eqref{eq:min3}, denoted as $\cM$. As a result, the geometric measure of entanglement of $\r$, namely $G(\r)$ in \eqref{eq:gme} is upper bounded by $-\log_2(1-\cM )$. So we have managed to evaluate the geometric measure of the PPT entangled state $\r$ from a family of $2\times2\times4$ UPBs we have constructed in this paper. The more explicit evaluation requires a better understanding of \eqref{eq:d1d2d3}.

\section{Conclusions}
\label{sec:con}

We have constructed two families of multipartite UPBs. They are respectively a $2\times2\times4$ UPB of size eight by using the four-qubit UPBs of size eight, as well as a $2\times2\times2\times4$ UPB of size eight by using the five-qubit UPBs of size eight. As a byproduct, we have constructed two novel multipartite PPT entangled states and evaluate their entanglement. A question arising from this paper is to construct more multipartite high-dimensional UPBs using the existing multiqubit UPBs, as it is always a technical challenge with extensive applications in various quantum information tasks.

\section*{Acknowledgments}

LC was supported by the NNSF of China (Grant No. 11871089), and the Fundamental Research Funds for the Central Universities (Grant Nos. KG12040501, ZG216S1810 and ZG226S18C1). 

\section*{Conflict of interest statement}

On behalf of all authors, the corresponding author states that there is no conflict of interest. 

\section*{Data availability statement} 

All data, models, and code generated or used during the study appear in the submitted article.

\appendix


\bibliographystyle{unsrt}

\bibliography{channelcontrol}

\begin{thebibliography}{10}

\bibitem{BDF+99}
Charles~H. Bennett, David~P. DiVincenzo, Christopher~A. Fuchs, Mor Tal, Rains
  Eric, Peter~W. Shor, John~A. Smolin, and William~K. Wootters.
\newblock Quantum nonlocality without entanglement.
\newblock {\em Physical Review A}, 59:1070--1091, Feb 1999.

\bibitem{dms03}
D.~P. {DiVincenzo}, T.~{Mor}, P.~W. {Shor}, J.~A. {Smolin}, and B.~M. {Terhal}.
\newblock {Unextendible Product Bases, Uncompletable Product Bases and Bound
  Entanglement}.
\newblock {\em Communications in Mathematical Physics}, 238:379--410, 2003.

\bibitem{Pittenger2003Unextendible}
Arthur~O Pittenger.
\newblock Unextendible product bases and the construction of inseparable
  states.
\newblock {\em Linear Algebra and Its Applications}, 359(1-3):235--248, 2003.

\bibitem{Terhal2001A}
Barbara~M. Terhal.
\newblock A family of indecomposable positive linear maps based on entangled
  quantum states.
\newblock {\em Linear Algebra and Its Applications}, 323(1):61--73, 2001.

\bibitem{SLJM10}
Per~Oyvind Sollid, Jon~Magne Leinaas, and Jan Myrheim.
\newblock Unextendible product bases and extremal density matrices with
  positive partial transpose.
\newblock {\em Physical Review A}, 84:042325, Oct 2011.

\bibitem{ASH+11}
R.~Augusiak, J.~Stasinska, C.~Hadley, J.~K. Korbicz, M.~Lewenstein, and
  A.~Acin.
\newblock Bell inequalities with no quantum violation and unextendable product
  bases.
\newblock {\em Phys. Rev. Lett.}, 107:070401, Aug 2011.

\bibitem{AL01}
Noga Alon and L{\'a}szl{\'o} Lov{\'a}sz.
\newblock Unextendible product bases.
\newblock {\em Journal of Combinatorial Theory, Series A}, 95(1):169--179,
  2001.

\bibitem{Fen06}
K.~Feng.
\newblock Unextendible product bases and $1$-factorization of complete graphs.
\newblock {\em Discrete Appl. Math.}, 154:942--949, 2006.

\bibitem{Chen2013The}
Jianxin Chen and Nathaniel Johnston.
\newblock The minimum size of unextendible product bases in the bipartite case
  (and some multipartite cases).
\newblock {\em Communications in Mathematical Physics}, 333(1):351--365, 2013.

\bibitem{Tura2012Four}
J.~Tura, R.~Augusiak, P.~Hyllus, M.~Kuś, J.~Samsonowicz, and M.~Lewenstein.
\newblock Four-qubit entangled symmetric states with positive partial
  transpositions.
\newblock {\em Physical Review A}, 85(6):060302, 2012.

\bibitem{Szanto2016Complementary}
Andras Szanto.
\newblock Complementary decompositions and unextendible mutually unbiased
  bases.
\newblock {\em Linear Algebra and Its Applications}, 496:392--406, 2016.

\bibitem{Chen2014Unextendible}
Jianxin Chen, Lin Chen, and Bei Zeng.
\newblock Unextendible product basis for fermionic systems.
\newblock {\em Journal of Mathematical Physics}, 55(8), 2014.

\bibitem{Hou2014A}
Jinchuan Hou, Chi~Kwong Li, Yiu~Tung Poon, Xiaofei Qi, and Nung~Sing Sze.
\newblock A new criterion and a special class of k -positive maps.
\newblock {\em Linear Algebra and Its Applications}, 470:51--69, 2014.

\bibitem{2021Strongly}
F.~Shi, M.~S. Li, M.~Hu, L.~Chen, M.~H. Yung, Y.~L. Wang, and X.~Zhang.
\newblock Strongly nonlocal unextendible product bases do exist.
\newblock {\em quantum}, 2022.

\bibitem{Shi2021}
Fei Shi, Mao-Sheng Li, Lin Chen, and Xiande Zhang.
\newblock Strong quantum nonlocality for unextendible product bases in
  heterogeneous systems.
\newblock {\em Journal of Physics A: Mathematical and Theoretical},
  55(1):015305, dec 2021.

\bibitem{PhysRevA.98.012313}
Maciej Demianowicz and Remigiusz Augusiak.
\newblock From unextendible product bases to genuinely entangled subspaces.
\newblock {\em Phys. Rev. A}, 98:012313, Jul 2018.

\bibitem{PhysRevA.90.054303}
Yu~Guo and Shengjun Wu.
\newblock Unextendible entangled bases with fixed schmidt number.
\newblock {\em Phys. Rev. A}, 90:054303, Nov 2014.

\bibitem{2020Unextendible}
F.~Shi, X.~Zhang, and L.~Chen.
\newblock Unextendible product bases from tile structures and their local
  entanglement-assisted distinguishability.
\newblock {\em Physical Review A}, 101(6), 2020.

\bibitem{Reed2012Realization}
M.~D. Reed, L~Dicarlo, S.~E. Nigg, L.~Sun, L~Frunzio, S.~M. Girvin, and R.~J.
  Schoelkopf.
\newblock Realization of three-qubit quantum error correction with
  superconducting circuits.
\newblock {\em Nature}, 482(7385):382--5, 2012.

\bibitem{Joh13}
Nathaniel Johnston.
\newblock The minimum size of qubit unextendible product bases.
\newblock {\em arXiv preprint arXiv:1302.1604}, 2013.

\bibitem{Dicarlo2010Preparation}
L.~Dicarlo, M.~D. Reed, L.~Sun, B.~R. Johnson, J.~M. Chow, J.~M. Gambetta,
  L.~Frunzio, S.~M. Girvin, M.~H. Devoret, and R.~J. Schoelkopf.
\newblock Preparation and measurement of three-qubit entanglement in a
  superconducting circuit.
\newblock {\em Nature}, 467(7315):574--8, 2010.

\bibitem{Bravyi2004Unextendible}
S.~B Bravyi.
\newblock Unextendible product bases and locally unconvertible bound entangled
  states.
\newblock {\em Quantum Information Processing}, 3(6):309--329, 2004.

\bibitem{Johnston2014The}
Nathaniel Johnston.
\newblock The structure of qubit unextendible product bases.
\newblock {\em Journal of Physics A Mathematical and Theoretical}, 47(47),
  2014.

\bibitem{2020The}
K.~Wang and L.~Chen.
\newblock The construction of 7-qubit unextendible product bases of size ten.
\newblock {\em Quantum Information Processing}, 19(6), 2020.

\bibitem{Chen2018Nonexistence}
Lin Chen and Dragomir~Z. Djokovic.
\newblock Nonexistence of n -qubit unextendible product bases of size
  \(2^n-5\).
\newblock {\em Quantum Information Processing}, 17(2):24, 2018.

\bibitem{hhh05}
K.~Horodecki, M.~Horodecki, P.~Horodecki, and J.~Oppenheim.
\newblock Secure key from bound entanglement.
\newblock {\em Physical Review Letters}, 94(16):160502, 2005.

\bibitem{2021A}
L.~Qian, L.~Chen, D.~Chu, and Y.~Shen.
\newblock A matrix inequality for entanglement distillation problem.
\newblock {\em Linear Algebra and its Applications}, 616(18), 2021.

\bibitem{Sun2020}
Yize Sun and Lin Chen.
\newblock The distillability of entanglement of bipartite reduced density
  matrices of a tripartite state.
\newblock {\em Journal of Physics A: Mathematical and Theoretical},
  53(27):275304, jun 2020.

\bibitem{Shen2020}
Yi~Shen, Lin Chen, and Li-Jun Zhao.
\newblock Inertias of entanglement witnesses.
\newblock {\em Journal of Physics A: Mathematical and Theoretical},
  53(48):485302, nov 2020.

\bibitem{2021Detection}
Yize Sun and Lin Chen.
\newblock Detection of tripartite genuine entanglement by two bipartite
  entangled states.
\newblock {\em Annalen der Physik}, 533(1), 2021.

\bibitem{Sun2020-2}
Yize Sun, Lin Chen, and Li-Jun Zhao.
\newblock Tripartite genuinely entangled states from entanglement-breaking
  subspaces.
\newblock {\em Journal of Physics A: Mathematical and Theoretical},
  54(2):025303, dec 2020.

\bibitem{Shen2020-2}
Yi~Shen and Lin Chen.
\newblock Construction of genuine multipartite entangled states.
\newblock {\em Journal of Physics A: Mathematical and Theoretical},
  53(12):125302, mar 2020.

\bibitem{2019Separability}
L.~Chen, D.~Chu, L.~Qian, and Y.~Shen.
\newblock Separability of completely symmetric states in multipartite system.
\newblock {\em Physical Review A}, 99(3), 2019.

\bibitem{Qian2020}
Lilong Qian, Lin Chen, and Delin Chu.
\newblock Separability of symmetric states and vandermonde decomposition.
\newblock {\em New Journal of Physics}, 22(3):033019, mar 2020.

\bibitem{1751-8121-50-39-395301}
Lin Chen and Dragomir~Z Djokovic.
\newblock Orthogonal product bases of four qubits.
\newblock {\em Journal of Physics A: Mathematical and Theoretical},
  50(39):395301, 2017.

\bibitem{Chen2018Multiqubit}
Lin Chen and Dragomir~Z. Djokovic.
\newblock Multiqubit upb: The method of formally orthogonal matrices.
\newblock {\em Journal of Physics A Mathematical and Theoretical}, 51(26),
  2018.

\bibitem{Chen2018The}
L.~Chen and DZ~Djokovic.
\newblock The unextendible product bases of four qubits: Hasse diagrams.
\newblock {\em Quantum Information Processing}, 18(5), 2019.

\bibitem{peres1996}
A.~Peres.
\newblock Separability criterion for density matrices.
\newblock {\em Phys. Rev. Lett.}, 77:1413, 1996.

\bibitem{hhh96}
M.~{Horodecki}, P.~{Horodecki}, and R.~{Horodecki}.
\newblock {Separability of mixed states: necessary and sufficient conditions}.
\newblock {\em Physics Letters A}, 223:1--8, February 1996.

\bibitem{horodecki1997}
P.~Horodecki.
\newblock Separability criterion and inseparable mixed states with positive
  partial transposition.
\newblock {\em Phys. Lett. A}, 232:333, 1997.

\bibitem{MD1975Completely}
MD~Choi.
\newblock Completely positive linear maps on c - algebras.
\newblock {\em Linear Algebra and Its Applications}, 10(3):285--290, 1975.

\bibitem{E1982Decomposable}
E~Stormer.
\newblock Decomposable positive maps on -algebras.
\newblock {\em Proc. Amer. Math. Soc.}, 1982.

\bibitem{cd11JMP}
L.~Chen and Dragomir~Z Okovic.
\newblock Description of rank four entangled states of two qutrits having
  positive partial transpose.
\newblock {\em Journal of Mathematical Physics}, 52(12):5385--410, 2011.

\bibitem{skowronek11}
U.~Skowronek.
\newblock Three-by-three bound entanglement with general unextendible product
  bases.
\newblock {\em Journal of Mathematical Physics}, 52(12), 2011.

\bibitem{wg2003}
Tzu-Chieh Wei and Paul~M. Goldbart.
\newblock Geometric measure of entanglement and applications to bipartite and
  multipartite quantum states.
\newblock {\em Physical Review A}, 68:042307, 2003.

\bibitem{cxz2010}
L.~Chen, A.~Xu, and H.~Zhu.
\newblock Computation of the geometric measure of entanglement for pure
  multiqubit states.
\newblock {\em Physical Review A}, 82:032301, 2010.

\end{thebibliography}


\end{document}